\DeclareMathAlphabet{\mathpzc}{OT1}{pzc}{m}{it}
\newtheorem{definition}{Definition}
\newtheorem{theorem}{Theorem}
\newtheorem{lemma}[theorem]{Lemma}
\newenvironment{example}{%
  \par\noindent
  \textbf{Example}
}{}
\newenvironment{examples}{%
  \par\noindent
  \textbf{Examples}
}{}
\def\SET#1{\lbrace #1 \rbrace}
\def\MSET#1#2{\mlbrace \, #1 \mid #2 \, \mrbrace}
\def\ZSET#1#2{\lbrace \, #1 \mid #2 \, \rbrace}
\def\calA{{\cal A}}
\def\calF{{\cal F}}
\def\calL{{\cal L}}
\def\calP{{\cal P}}
\def\calR{{\cal R}}
\def\calS{{\cal S}}
\def\calV{{\cal V}}
\def\calX{{\cal X}}
\def\endef{\hfill$\bullet$}
\newcommand{\blankline}{\vspace*{0.3\baselineskip}}
\newcommand{\halflineup}{\vspace*{-0.5\baselineskip}}
\newcommand{\myblock}[1]{\par\noindent\textbf{{#1}.}}
\def\bools{\mathbb{B}}
\def\nats{\mathbb{N}}	
\def\reals{\mathbb{R}}
\newcommand{\Set}{\textbf{Set}}
\newcommand{\nil}{\textbf{nil}}
\newcommand{\bfR}{\textbf{R}}
\newcommand{\bfT}{\textbf{T}}
\def\min{\mathrm{\textrm{min}}}
\newcommand{\CCS}{\textsl{CCS}}
\newcommand{\FuTS}{\textsl{Fu\hspace*{-0.5pt}TS}}
\newcommand{\IMC}{\textsl{IMC} \mkern1mu}
\newcommand{\IML}{\textsl{IML} \mkern1mu}
\newcommand{\LTS}{\textsl{LTS}}
\newcommand{\PEPA}{\textsl{PEPA}}
\newcommand{\RTS}{\textsl{RTS}}
\newcommand{\arf}{\textsl{arf} \mkern1mu}
\newcommand{\spt}{\textsl{spt\hspace{1pt}}}
\newcommand{\FALSE}{\texttt{false}}
\newcommand{\TRUE}{\texttt{true}}
\newcommand{\iml}{\mathit{iml}}
\newcommand{\pepa}{\mathit{pepa}}
\newcommand{\finv}{f^{\mkern-1mu -1}}
\newcommand{\sigmainv}{\sigma^{\mkern-1mu -1}}
\newcommand{\amset}[1]{\mathcal{#1}}
\newcommand{\amsetP}{\amset{P}}
\newcommand{\amsetQ}{\amset{Q}}
\newcommand{\Edelay}{\delta}
\newcommand{\Edelaya}{\: \delta_a}
\newcommand{\Edelayb}{\: \delta_b}
\def\Mtrans#1{\stackrel{#1}{\dashrightarrow}}
\newcommand{\OmegaS}{{\Omega_{\mkern2mu \calS}}}
\def\IMPL{\Rightarrow}
\def\IFF{\Leftrightarrow}
\newcommand{\Rclass}[1]{[{#1}]_R}
\newcommand{\Sclass}[1]{[{#1}]_{\mkern2mu \calS}}
\newcommand{\alambda}{(a,\lambda)}
\newcommand{\approxF}{\approx_{\mkern2mu \calF}}
\newcommand{\approxS}{\approx_{\calS}}
\newcommand{\bfzero}{\textbf{0}}
\newcommand{\blambda}{(b,\lambda)}
\newcommand{\bnfeq}{\mathrel{\; \mbox{\emph{\textrm{::=}}} \;}}
\newcommand{\calLi}{\calL_{\mkern2mu i}}
\newcommand{\calLn}{\calL_{\mkern2mu n}}
\newcommand{\calLone}{\calL_{\mkern2mu 1}}
\newcommand{\calRi}{\calR_{\mkern2mu i}}
\newcommand{\calRn}{\calR_{\mkern2mu n}}
\newcommand{\calRone}{\calR_{\mkern2mu 1}}
\newcommand{\calSiml}{\calS_\iml}
\newcommand{\calSpepa}{\calS_\pepa}
\newcommand{\calVLR}{\calV^{\calL \mkern2mu}_{\mkern-4mu \calR \mkern0mu}}
\newcommand{\calVS}{\calV_{\mkern-4mu \calS \mkern0mu}}
\newcommand{\calViml}{\calV_\iml}
\def\chut{\calX}
\def\cho{\, + \, }
\newcommand{\compose}{\mathop{\raisebox{0.5pt}{\scriptsize $\circ$}}}
\def\dfas{:=}
\newcommand{\eqiml}{=_{\iml}}
\newcommand{\eqpepa}{=_{\pepa}}
\newcommand{\fivetuple}[5]{( \mkern1mu {#1}, \mkern1mu {#2}, \mkern1mu {#3}, \mkern1mu {#4}, \mkern1mu {#5} \mkern1mu )}
\newcommand{\fmorph}[3]{[\![ {#3} ]\!]^{#1}_{#2}}
\newcommand{\fmorphS}[2]{\fmorph{\mkern2mu \calS}{#1}{#2}}
\newcommand{\fsfn}[2]{\mkern1mu \mathrm{\mathcal{F} \mkern-2.5mu \mathcal{S}}( \mkern1mu #1,#2 \mkern2mu )}
\def\fsum{\mathopen{\oplus \mkern1mu}}
\newcommand{\la}{\mathop{\langle \,}}
\newcommand{\mlbrace}{\mathopen{\lbrace \mkern-2.25mu | \mkern2.25mu}}
\newcommand{\mrbrace}{\mathclose{\mkern2mu | \! \rbrace}}
\newcommand{\mkell}{{\mkern1mu \ell}}
\def\mtrans#1{\stackrel{#1}{\rightarrowtail}}
\newcommand{\myell}{{\ell \mkern2mu}}
\newcommand{\myf}{{f \mkern2mu}}
\newcommand{\myi}{{\mkern1mu i \mkern1mu}}
\newcommand{\myin}{\, \in \,}
\newcommand{\myop}{\mathbin{\mkern1mu | \mkern1mu}}
\newcommand{\mytheta}{\theta \mkern1mu}
\def\nnreals{\reals_{\geqslant 0}}
\newcommand{\prlA}{\mathbin{\mkern2mu \parallel_{\mkern1mu A} \mkern0mu}}
\def\pfx#1#2{#1.#2}
\def\prc#1{{\calP}_{#1}}
\newcommand{\prcIML}{\prc{\IML}}
\newcommand{\prcPEPA}{\prc{\PEPA}}
\newcommand{\ra}{\mathclose{\, \rangle}}
\def\poreals{\reals_{> 0}}
\newcommand{\simS}{\sim_{\calS}}
\newcommand{\simiml}{\sim_\iml}
\newcommand{\simpepa}{\sim_{\pepa}}
\def\sosrule#1#2{\frac{\,\,\,\,#1\,\,\,\,}{\,\,\,\,#2\,\,\,\,}}
\def\sosrule#1#2{\begin{array}{c} {#1}\rule{0pt}{15pt} \\ \hline \rule{0pt}{15pt}{#2} \end{array}}
\def\sosrn#1#2{\mbox{{\normalsize (#1$_{\mbox{{\scriptsize #2}}}$)\ }}}
\newcommand{\thetai}{\theta_\myi}
\newcommand{\thetapepa}{\theta_\pepa}
\newcommand{\threetuple}[3]{( \mkern1mu {#1}, \mkern1mu {#2}, \mkern1mu {#3} \mkern1mu )}
\newcommand{\trans}[1]{\stackrel{#1}{\rightarrow}}
\newcommand{\transalambda}{\trans{a,\lambda}}
\newcommand{\tssum}{\textstyle{\sum \,}}
\newcommand{\twotuple}[2]{( \mkern1mu {#1}, \mkern1mu {#2}\mkern1mu )}
\newcommand{\zerofn}{\underline{\mathrm{0}}}
\renewcommand{\zerofn}{[ \mkern1mu ]}
\def\zerof{\zerofn}
\title{Bisimulation of Labeled State-to-Function Transition Systems 
  \\ of Stochastic Process Languages}
\author{%
  D. Latella
  \&
  M. Massink
  \institute{CNR -- Istituto di Scienza e Tecnologie dell'Informazione `A. Faedo'}
  \and
  E.P. de Vink\,\thanks{%
    Corresponding author, email~\url{evink@win.tue.nl}.}
  \institute{Technische Universiteit Eindhoven and Centrum Wiskunde Informatica}
}
\begin{document}
\maketitle

\begin{abstract}
  \textbf{Abstract}
  Labeled state-to-function transition systems, $\FuTS$ for short, admit multiple transition schemes from states to functions of finite support over general semirings.
  As such they constitute a convenient modeling instrument to deal with stochastic process languages.
  In this paper, the notion of bisimulation induced by a $\FuTS$ is addressed from a coalgebraic point of view. 
  A correspondence result is proven stating that $\FuTS$-bisimulation coincides with the behavioral equivalence of the associated functor.
  As generic examples, the concrete existing equivalences for the core of the stochastic process algebras $\PEPA$ and~$\IML$ are related to the bisimulation of specific~$\FuTS$, providing via the correspondence result coalgebraic justification of the equivalences of these calculi.
\end{abstract}

%% file intro.tex
%!TEX root = accat.tex

\section{Introduction}

  Process description languages equipped with formal operational semantics are successful formalisms for modeling concurrent systems and analyzing their behavior.  
  Typically, the operational semantics is defined by means of a labeled transition system following the SOS approach.
  The states of the transition systems are just process terms, while the labels of the transitions between states represent the possible actions and interactions.  
  Process description languages often come equipped with process equivalences, so that system models can be compared according to specific behavioral relations.  

  In the last couple of decades, process languages have been enriched with quantitative information. 
  Among these quantitative extensions, those allowing a stochastic representation of time, usually referred to as stochastic process algebras, have received particular attention.   
  The main aim has been the integration of qualitative descriptions and quantitative analysis in a single mathematical framework by building on the combination of labeled transition systems and continuous-time Markov chains. 
  The latter being one of the most successful approaches to modeling and analyzing the performance of computer systems and networks.   
  An overview on stochastic process algebras, equivalences and related analysis techniques can be found in~\cite{HHK02,BHHKS04:voss,Ber07:sfm}, for example.  
  A common feature of many stochastic process algebras is that actions are enriched with the rates of exponentially distributed random variables that characterize their duration.
%  ; alternatively, actions are assumed instantaneous, in which case RVs are used for modeling \emph{delays}, as in~\cite{Her02:springer}.
  Although exploiting the same class of distributions, the models and the techniques underlying the definition of the calculi turn out to be significantly different in many respects. 
  A prominent difference concerns the modeling of the race condition by means of the choice operator, and its relationship to the issue of transition multiplicity. 
  In the quantitative setting, multiplicities can make a crucial distinction between processes that are qualitatively equivalent.
  Several significantly different approaches have been proposed for handling transition multiplicity. 
  The proposals range from multi-relations~\cite{Hil96:phd,Her02:springer}, to proved transition systems~\cite{Pri95:cj}, to $\LTS$ with numbered transitions~\cite{HHK02}, to unique rate names~\cite{DLM05}, just to mention a few.
%  \footnote{It should however be noticed that, due to a minor technical imprecision (multi-relations are defined as the {\em least} multi-relation satisfying a set of SOS axioms and rules) both~\cite{Hil96:phd} and \cite{Her02:springer} obtain relations and not the intended multi-relations, thus failing to model transitions multiplicity.}. 

\pagebreak[3]

In~\cite{De+09}, Latella, Massink et al.\ have proposed a variant of $\LTS$, called Rate Transition Systems ($\RTS$). 
  In~$\LTS$, a transition is a triple $(P,\alpha,P' \mkern1mu )$ where $P$ and~$\alpha$ are the source state and the label of the transition, respectively, while $P'$ is the target state reached from~$P$ via the transition. 
  In~$\RTS$, a transition is a triple of the form $(P,\alpha,\amsetP \mkern2mu )$. 
  The first and second component are the source state and the label of the transition, as in~$\LTS$, while the third component~$\amsetP$ is a continuation function which associates a non-negative real value to each state~$P'$.  
  A non-zero value for the state~$P'$ represents the rate of the exponential distribution characterizing the time for the execution of the action represented by~$\alpha$, necessary to reach $P'$ from~$P$ via the transition. 
  If~$\amsetP$~maps $P'$ to~$0$, then state~$P'$ is not reachable from~$P$ via the transition. 
  The use of continuation functions provides a clean and simple solution to the transition multiplicity problem and make~$\RTS$ particularly suited for stochastic process algebra semantics.
  In order to provide a uniform account of the many stochastic process algebras proposed in the literature, in previous joint work of the first two authors~\cite{De+11} Labelled State-to-Function Transition Systems ($\FuTS$) have been introduced as a natural generalization of~$\RTS$. 
  In~$\FuTS$ the co-domain of the continuation functions are arbitrary semirings, rather than just the non-negative reals.  
  This provides increased flexibility while preserving basic properties of primitive operations like sum and multiplication.      
%  Furthermore, also $\FuTS$s are equipped with a rich set of (generic) operations on continuation functions, which makes the framework very well suited  for the \emph{compositional} definition of the operational semantics of Process Calculi, including SPCs and models where both non-deterministic behavior and stochastic delays are modeled, like in the Language of Interactive Markov Chains~\cite{Her02:springer} and (Continuous Time) Markov Decision Processes.

  In this paper we present a coalgebraic treatment of~$\FuTS$ that allow multiple state-to-function transition relations involving arbitrary semirings.
  Given label sets~$\calLi$ and semirings~$\calRi$, a~$\FuTS$ takes the general format $\calS = ( \, S ,\, \la \mkern-2mu {\mtrans{}_i} \ra^n_{i = 1} \, )$ with transition relations ${\mtrans{}_i} \, \subseteq \, S \times \calLi \times \fsfn{\mkern2mu S}{\calRi}$.
  Here, $\fsfn{\mkern2mu S}{\calRi}$ are the sets of functions from~$S$ to~$\calRi$ of finite support, a subcollection of functions also occurring in other work combining coalgebra and quantitative modeling.
  We will associate to~$\calS$ the product of the functors $\fsfn{ {\cdot}}{\calRi}^{\mkern1mu \calLi}$.
  For this to work, we need the transition relations~$\mtrans{}_i$ to be total and deterministic for the coalgebraic modeling as a function.
  Maybe surprisingly, this isn't a severe restriction at all in the presence of continuation functions: 
  the zero-continuation $\lambda s' . \mkern1mu 0$ expresses that no $\LTS$-transition exists from the state~$s$ to any state~$s'$; if $s$~allows a transition to some state~$s_1$ as well as a state~$s_2$, the continuation function will simply yield a non-zero value for $s_1$ and for~$s_2$.

  The notion of $\calS$-bisimulation that arises from a~$\FuTS$~$\calS$ is reinterpreted coalgebraically as the behavioral equivalence of a functor that is induced by~$\calS$, along the lines sketched above.
  Behavioral equivalence rather than coalgebraic bisimulation is targeted, since, dependent on the semirings involved, weak pullbacks may not be preserved and the construction of a mediating morphism for a coalgebraic bisimulation from a concrete one may fail for degenerate denominators.
  However, following a familiar argument, we show that the functor associated with a~$\FuTS$ does possess a final coalgebra and therefore has an associated notion of behavioral equivalence indeed.
  It is noted, in the presence of a final coalgebra for $\FuTS$ a more general definition of behavioral equivalence based on cospans coincides~\cite{Kur00:phd}.
  A correspondence result is proven in this paper that shows that the concrete bisimulation of a~$\FuTS$, coincides with behavioral equivalence of its functor.
  Pivotal for its proof is the absence of multiplicities in the $\FuTS$ treatment of quantities.
  %% These are accumulated in the function values of the continuations and hidden at the higher level of abstraction.
  
  Using the bridge established by the correspondence result, we continue by showing for two well-known stochastic process algebras, viz.\ Hillston's $\PEPA$~\cite{Hil96:phd} and Hermanns's $\IML$~\cite{Her02:springer}, that the respective standard notion of strong equivalence and strong bisimulation coincides with behavioral equivalence of the associated~$\FuTS$.
  This constitutes the main contribution of the paper.
  $\PEPA$ stands out as one of the prominent Markovian process algebras, while $\IML$ specifically provides separate prefix constructions for actions and for delays.
  The equivalences of~$\PEPA$ and of~$\IML$ are compared with the bisimulations of the respective~$\FuTS$ as given by an alternative operational semantics involving the state-to-function scheme.
  In passing, the multiplicities have to be dealt with. 
  Appropriate lemmas are provided relating the relation-based cumulative treatment with $\FuTS$ to the multirelation-based explicit treatment of $\PEPA$ and~$\IML$.

  Related work on coalgebra includes \cite{DeR99,KS08:fossacs,Sok11:tcs}, papers that also cover measures and congruence formats, a topic not touched upon here.
  For the discrete parts, regarding the correspondence of bisimulations, our work aligns with the approach of the papers mentioned.
  In this paper the bialgebraic perspective of SOS and bisimulation~\cite{TP97:lics} is left implicit.
  An interesting direction of research combining coalgebra and quantities studies various types of weighted automata, including linear weighted automata, and associated notions of bisimulation and languages, as well as algorithms for these notions~\cite{Bor09:concur,SBBR10:ic,BBBRS11}.
  In particular, building on a result on bounded functors~\cite{GS01:au}, it is shown in~\cite{BBBRS11} for a functor involving functions of finite support over a field that the final coalgebra exists. 
  Below, we have followed the scheme of~\cite{BBBRS11} to obtain such a result for a functor induced by a~$\FuTS$.
  The notions of equivalence addressed in this paper, as often in coalgebraic treatments of process relations, are all strong bisimilarities.

  The present paper is organized as follows: 
  Section~\ref{sec-preliminaries} briefly discusses some material on semirings and coalgebras. 
  Labeled state-to-function transition systems and $\FuTS$ as well as the associated notion of bisimulation are provided in Section~\ref{sec-futs}.
  The coalgebraic counterparts of $\FuTS$ and $\FuTS$-bisimulation are defined in Section~\ref{sec-coalgebra}, where we also establish the correspondence with behavioral equivalence of the final coalgebra.
  In~Section~\ref{sec-pepa} the standard equivalence of~$\PEPA$ is identified with the bisimulation of a $\FuTS$ and, hence, with behavioral equivalence.
  In~Section~\ref{sec-iml} the same is done for the language of~$\IMC$ where actions and delays are present on equal footing.
  Section~\ref{sec-conclusions} wraps up and discusses directions of future research.
  An appendix provides the proofs of a number of lemmas.

%% file preliminaries
%!TEX root = accat.tex

\section{Preliminaries}
\label{sec-preliminaries}

\noindent
  A tuple $\calR = \fivetuple{R}{+}{0}{\ast}{1}$ is called a semiring, if $\threetuple{R}{+}{0}$ is a commutative monoid with neutral element~$0$, $\threetuple{R}{\ast}{1}$ is a monoid with neutral element~$1$, $\ast$~distributes over~$+$, and $0 \ast r = r \ast 0 = 0$ for all~$r \in R$.
  As examples of a semiring we will use are the booleans
$\bools = \SET{ \, \FALSE ,\, \TRUE \, }$ with disjunction as sum and conjunction as multiplication, and the non-negative reals~$\nnreals$ with the standard operations.
  We will consider, for a semiring~$\calR$ and a function $\varphi : X \to \calR$, countable sums $\tssum_{x \myin X'} \; \varphi(x)$ in~$\calR$, for~$X' \subseteq X$. For such a sum to exist we require~$\varphi$ to be of finite support, i.e.\ the support set $\spt(\varphi) = \ZSET{ x \in X }{ \varphi(x) \neq 0 }$ is finite.
  Here, $0$~is the neutral element of~$\calR$ with respect to~$+$.
  
  We use the notation $\fsfn{X}{\calR}$ for the collection of all functions of finite support from the set~$X$ to the semiring~$\calR$.
  A construct $[ \, x_1 \mapsto r_1 ,\, \ldots ,\, x_n \mapsto r_n \,]$, with $x_i \in X$, $i = 1 \ldots n$ all distinct, $r_i \in \calR$, $i = 1 \ldots n$, denotes the mapping that assigns $r_i$ to~$x_i$, $i = 1 \ldots n$, and assigns $0$ to all $x \in X$ different from all~$x_i$.
  In particular~$\zerof \mkern1mu$, or more precisely~$\zerof_{\mkern1mu \calR}$, is the constant function~$x \mapsto 0$ and $\chut_{x} = [ \,  x \mapsto 1 \, ]$ is the characteristic function on~$\calR$ for $x \in X$.
  For $\varphi \in \fsfn{X}{\calR}$, we write $\fsum \varphi$ for the value $\tssum_{x \in X} \; {\varphi(x)}$ in~$\calR$.
  For $\varphi, \psi \in \fsfn{X}{\calR}$, the function $\varphi \cho \psi$ is the pointwise sum of $\varphi$ and~$\psi$, i.e.\ $(\varphi \cho \psi)(x) = \varphi(x) \cho \psi(x) \in \calR$.
  Clearly, $\varphi \cho \psi$ is of finite support as $\varphi$ and~$\psi$ are.
  Given an injective operation~${\myop} \colon X \times X \to X$, we define $\varphi \myop \psi : X \to \calR$, by $( \varphi \myop \psi )( x ) = \varphi(x_1) \ast \psi(x_2)$ if $x = x_1 \myop x_2$ for some $x_1, x_2 \in X$, and $( \varphi \myop \psi )( x ) = 0$ otherwise.
  Again, $\varphi \myop \psi$ is of finite support as $\varphi$ and~$\psi$ are.
  This is used in the setting of syntactic processes~$P$ that may have the form $P_1 \prlA P_2$ for two processes $P_1$ and~$P_2$ and a syntactic operator~$\prlA$.
  %% We have the following properties.

\blankline

\begin{lemma}
\label{lm-props-fsum}
  Let~$X$ be a set, $\calR$ a semiring, and $\myop$ an injective binary operation on~$X$.
  For $\varphi, \psi \in \fsfn{X}{\calR}$ it holds that $\fsum (\varphi \cho \psi ) = \fsum \varphi \cho \fsum \psi$ and $\fsum ( \, \varphi \myop \psi \, ) = ( \fsum \varphi ) \ast ( \fsum \psi )$.
  \qed
\end{lemma}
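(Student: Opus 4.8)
The plan is to reduce both identities to finite computations in the underlying monoids of $\calR$ — which is legitimate because every function involved is of finite support — and to isolate the single place where injectivity of $\myop$ does real work.

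For the first identity I would observe that $\varphi$, $\psi$, and $\varphi \cho \psi$ all vanish outside the finite set $F = \spt(\varphi) \mbcup \spt(\psi)$, so that each of $\fsum(\varphi \cho \psi)$, $\fsum\varphi$, and $\fsum\psi$ collapses to a sum over $F$. The remaining claim, $\tssum_{x \myin F}(\varphi(x) \cho \psi(x)) = (\tssum_{x \myin F}\varphi(x)) \cho (\tssum_{x \myin F}\psi(x))$, is then nothing but the associativity and commutativity of $\cho$ in the commutative monoid $\threetuple{R}{+}{0}$, which I would settle by an easy induction on $|F|$. This part is routine.

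For the second identity I would first expand the right-hand side. Writing $\fsum\varphi = \tssum_{x_1 \myin \spt(\varphi)}\varphi(x_1)$ and $\fsum\psi = \tssum_{x_2 \myin \spt(\psi)}\psi(x_2)$ and applying right-distributivity and then left-distributivity of $\ast$ over $\cho$ — both furnished by the semiring axioms, and both preserving the left-to-right order of the factors, so that no commutativity of $\ast$ is required — one obtains
$$
(\fsum\varphi)\ast(\fsum\psi) \;=\; \tssum_{x_1 \myin \spt(\varphi)}\ \tssum_{x_2 \myin \spt(\psi)} \varphi(x_1)\ast\psi(x_2).
$$
On the left-hand side, $\varphi \myop \psi$ vanishes outside the finite set $\ZSET{x_1 \myop x_2}{x_1 \myin \spt(\varphi),\ x_2 \myin \spt(\psi)}$: any $x$ not in the image of $\myop$ gives $(\varphi \myop \psi)(x) = 0$ by definition, while any $x = x_1 \myop x_2$ with $\varphi(x_1) = 0$ or $\psi(x_2) = 0$ gives value $0$ by the annihilation law $0 \ast r = r \ast 0 = 0$.

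The crux — and the only non-routine point — is the reindexing that matches the left-hand sum with this double sum, and it is precisely here that injectivity of $\myop$ is indispensable. Because $\myop$ is injective, the map $(x_1, x_2) \mapsto x_1 \myop x_2$ is a bijection from $\spt(\varphi) \times \spt(\psi)$ onto its image, and for each $x = x_1 \myop x_2$ in that image the unique preimage witnesses $(\varphi \myop \psi)(x) = \varphi(x_1) \ast \psi(x_2)$. Summing along this bijection therefore converts $\fsum(\varphi \myop \psi)$ into the double sum above, closing the argument. I expect this step to be the main obstacle: were $\myop$ not injective, several pairs could collapse onto a single $x$, and $(\varphi \myop \psi)(x)$ would record just one product instead of their sum, breaking the identity.
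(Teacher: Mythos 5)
Your proof is correct and takes essentially the same approach as the paper's: distributivity of $\ast$ over $+$ turns $(\fsum \varphi) \ast (\fsum \psi)$ into a double sum over pairs, and injectivity of $\myop$ is used exactly where you place it, to reindex that double sum as $\fsum ( \, \varphi \myop \psi \, )$ with all terms outside the image of $\myop$ vanishing. Your explicit handling of supports and of the annihilation law is just a more careful rendering of the paper's closing step, and your treatment of the first identity matches the paper's appeal to commutativity of $+$.
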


\blankline

\noindent
  We recall some basic definitions from coalgebra.
  See e.g.~\cite{Rut00:tcs} for more details.
  For a functor $\calF : \Set \to \Set$ on the category $\Set$ of sets and functions, a coalgebra of~$\calF$ is a set~$X$ together with a mapping $\alpha : X \to \calF(X)$.
  A homomorphism between two $\calF$-coalgebras $(X,\alpha)$ and~$(Y,\beta)$ is a function $f : X \to Y$ such that $\calF(f) \compose \alpha = \beta \compose f$.
  An $\calF$-coalgebra $(\Omega,\omega)$ is called final, if there exists, for every $\calF$-coalgebra $(X,\alpha)$, a unique homomorphism $\fmorph{\calF}{X}{\cdot} : (X,\alpha) \to (\Omega,\omega)$.
  Two elements~$x_1, x_2$ of a coalgebra~$(X,\alpha)$ are called behavioral equivalent with respect to~$\calF$ if $\fmorph{\calF}{X}{x_1} = \fmorph{\calF}{X}{x_2}$, notation $x_1 \approxF x_2$.

  Using a characterization of~\cite{GS02:mscs}, a functor $\calF$ on~$\Set$ is bounded, if there exist sets $A$ and~$B$ and a surjective natural transformation $\eta : A \times ( {\cdot} )^B \mathrel{\Rightarrow} \calF$. 
  Here, $A \times ( {\cdot} )$ is the functor that maps a set~$X$ to the Cartesian product $A \times X$ and maps a function $f : X \to Y$ to the mapping $A \times f : A \times X \to A \times Y$ with $(A \times f)(a,x) = (a,f(x))$, while $( {\cdot} )^B$ denotes the functor that maps a set~$X$ to the function space $X^B$ of all functions from $B$ to~$X$ and that maps a function $f : X \to Y$ to the mapping $f^B : X^B \to Y^B$ with $f^B( \varphi )( b ) = f( \varphi( b ) )$.
  For bounded functors we have the following result, see~\cite{GS01:au} for a proof.

\blankline

\begin{theorem}
\label{th-existence-final-coalgebra}
  If a functor $\calF : \Set \to \Set$ is bounded, then its final coalgebra exists.
  \qed
\end{theorem}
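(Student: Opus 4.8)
\noindent
The plan is to transport a final coalgebra from the simple ``branching'' functor $\calG = A \times ({\cdot})^B$ to $\calF$ along the surjection $\eta$. First I would treat $\calG$ directly. A $\calG$-coalgebra amounts to an output map $X \to A$ together with a $B$-indexed successor map $X \to X^B$, so $\calG$ carries the explicit final coalgebra $(\OmegaG, \omega_{\calG})$ with carrier $\OmegaG = A^{(B^*)}$, the set of functions from finite words over $B$ to $A$: its structure map reads off the value at the empty word and, for each $b \in B$, returns the residual function $w \mapsto f(bw)$, and the unique homomorphism out of any $\calG$-coalgebra records, for each word, the output reached by following that word. This step is routine and does not use boundedness.

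Next I would use $\eta$ to turn $\OmegaG$ into a weakly final $\calF$-coalgebra. Setting $\bar\omega = \eta_{\OmegaG} \compose \omega_{\calG} : \OmegaG \to \calF(\OmegaG)$ makes $(\OmegaG, \bar\omega)$ an $\calF$-coalgebra. Given any $\calF$-coalgebra $(X, \alpha)$, surjectivity of $\eta_X$ lets me choose, invoking the axiom of choice, a map $\gamma : X \to \calG(X)$ with $\eta_X \compose \gamma = \alpha$, that is, a $\calG$-coalgebra structure on $X$ lifting $\alpha$. Finality of $\OmegaG$ for $\calG$ yields a $\calG$-homomorphism $h : (X,\gamma) \to (\OmegaG, \omega_{\calG})$, and a short diagram chase using the naturality of $\eta$ shows that this same $h$ is an $\calF$-homomorphism $(X,\alpha) \to (\OmegaG, \bar\omega)$. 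Hence $(\OmegaG,\bar\omega)$ is weakly final for $\calF$.

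Finally I would pass from this weakly final coalgebra to a genuinely final one by quotienting out behavioral equivalence. On $W = (\OmegaG, \bar\omega)$ I would take $\approxF$ and check that it is an equivalence relation: reflexivity and symmetry are immediate, while transitivity follows by composing the witnessing cospans, which exist because colimits of $\calF$-coalgebras are created by the forgetful functor to $\Set$. Using that an element and its image under a homomorphism are always behaviorally equivalent, I would verify that $\approxF$ is a congruence, so that the quotient $W/{\approxF}$ inherits an $\calF$-coalgebra structure making the projection a homomorphism. Existence of a homomorphism from an arbitrary $(X,\alpha)$ into $W/{\approxF}$ is then inherited from weak finality of $W$; for uniqueness I would show that $W/{\approxF}$ is simple, i.e.\ that no two distinct classes are behaviorally equivalent, which, again via weak finality, reduces to global behavioral equivalence agreeing on $W$ with the relation just quotiented out.

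The main obstacle is precisely this last passage from weak finality to finality. Because the semirings in play may force $\calF$ to fail preservation of weak pullbacks --- exactly the phenomenon flagged when behavioral equivalence is preferred over coalgebraic bisimulation --- I cannot lean on bisimulations and must argue throughout with behavioral equivalence and cospans; establishing that $\approxF$ is a congruence and that the quotient is simple, and hence that homomorphisms into it are unique, is the delicate core of the argument. A secondary subtlety is the appeal to the axiom of choice when lifting the $\calF$-coalgebra structure $\alpha$ through the componentwise-surjective $\eta$.
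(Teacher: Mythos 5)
Your proof is correct and takes essentially the paper's own route: the paper proves nothing internally here (Theorem~\ref{th-existence-final-coalgebra} is imported from~\cite{GS01:au}, hence the bare \qed), and your three-stage argument --- the explicit final Moore coalgebra $A^{(B^\ast)}$ for $A \times ({\cdot})^B$, weak finality of it as an $\calF$-coalgebra obtained by choice-lifting structures through the componentwise surjective natural transformation $\eta$, and passage to the simple quotient by cospan-defined behavioral equivalence using that colimits of coalgebras are created by the forgetful functor --- is precisely the standard proof behind that citation, which the paper's Lemma~\ref{lm-v-is-bounded} is designed to feed into. The only detail worth making explicit in your final stage is smallness: the surjective homomorphisms out of the weakly final coalgebra have kernels forming a set of equivalence relations on a fixed carrier, so the cointersection realizing $\approxF$ as the kernel of a single homomorphism (and hence the quotient coalgebra structure) indeed exists.
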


\blankline

\noindent
  A number of proofs of results on process languages~$\prc{}$ in this paper relies on so-called guarded recursion~\cite{BV96:mit}.
  Typically, constants~$X$ are a syntactical ingredient in these languages.
  As usual, if $X \dfas P$, i.e.\ the constant~$X$ is declared to have the process~$P$ as its body, we require $P$ to be prefix-guarded.
  Thus, any occurrence of a constant in the body~$P$ is in the scope of a prefix-construct of the language.
  Guarded recursion assumes the existence of a function $c : \prc{} \to \nats$ such that $c(P_1 \mathbin{\raisebox{0.5pt}{\fontsize{9}{0}{$\bullet$}}} P_2) > \textrm{max} \SET{ \, c(P_1) ,\, c(P_2) \,}$ for all syntactic operations~\raisebox{0.5pt}{\fontsize{9}{0}{$\bullet$}} of~$\prc{}$, and moreover $c(X) > c(P)$ if $X \dfas P$.

%% file ltfs.tex
%!TEX root = accat.tex

\section{Labeled State-to-Function Transition Systems}
\label{sec-futs}

The definition of a labeled state-to-function transition system, $\FuTS$ for short, involves a set of states~$S$ and one or more relations of states and functions from states into a semiring.
  For sums over arbitrary subsets of states to exist, the functions are assumed to be of finite support.

\begin{definition}
  \label{df-ltfs}
  A $\FuTS$~$\calS$, in full a labeled state-to-function transition system, over a number of label sets~$\calLi$ and semirings~$\calRi$, $i = 1 \ldots n$, is a tuple $\calS = ( \, S ,\, \la \mkern-2mu {\mtrans{}_i} \ra^n_{i = 1} \, )$ such that ${\mtrans{}_i} \, \subseteq \, S \times \calLi \times \fsfn{\mkern2mu S}{\calRi}$, for $i = 1 \ldots n$.
\endef
\end{definition}

\noindent
  As usual, we write $s \mtrans{\ell}_i v$ for $(s,\ell,v) \in {\mtrans{}_i}$.
  For a $\FuTS$~$\calS = ( \, S ,\, \la \mkern-2mu {\mtrans{}_i} \ra^n_{i = 1} \, )$ the set~$S$ is called the set of states. 
  We refer to each~$\mtrans{}_i$ as a state-to-function transition relation of~$\calS$ or just as a transition relation of it.
  If for~$\calS$ we have that $n = 1$, i.e.\ there is only one state-to-function transition relation~$\mtrans{}$, then $\calS$ is called simple.
  A $\FuTS$~$\calS$ is called total and deterministic if for each transition relation ${\mtrans{}_i} \subseteq S \times \calLi \times \fsfn{S}{\calRi}$ involved and for all $s \in S$, $\ell \in \calLi$, we have $s \mtrans{\ell}_i v$ for exactly one $v \in \fsfn{S}{\calRi}$.
  In such a situation, the zero-function $\zerof_{\calRi}$ plays a special role.
  A state-to-function transition $s \mtrans{\ell}_i \zerof_{\calRi}$ reflects the absence of a non-trivial transition $s \mtrans{\ell}_i v$ for $v \neq \zerof_{\calRi}$.
  In the context of $\LTS$ one says that $s$~has no $\ell$-transition.
  For the remainder of the paper, all $\FuTS$ we consider are assumed to be total and deterministic.\footnote{Definition~\ref{df-ltfs} slightly differs in formulation from the one in~\cite{De+11}.}

\blankline

\begin{examples}
  For the modeling of $\CCS$ processes, we choose a set of actions~$\calA$ as label set and the booleans~$\bools$ as semiring.  
  Consider the two $\CCS$ processes $P = a.b.\bfzero \cho a.c.\bfzero$ and $Q = a.(b.\bfzero \cho c.\bfzero)$, their representation as a $\FuTS$ is depicted in Figure~\ref{fig-examples}.
  For process~$P$ we have $P \mtrans{a} [ b.\bfzero \mapsto \TRUE, \, c.\bfzero \mapsto \TRUE ]$, while the process~$Q$ we have $Q \mtrans{a} [ b.\bfzero \cho c.\bfzero \mapsto \TRUE ]$.
  So, $\FuTS$ are able to represent branching.

  As another example of a simple $\FuTS$, Figure~\ref{fig-examples} displays a $\FuTS$ over the action set~$\calA$ and the semiring~$\nnreals$ of the non-negative real numbers. 
  The functions $v_0$ to~$v_4$ used in the example have the property that $\fsum v_\myi(s) \: = \: 1$, for $i = 0 \ldots 4$.
  Usually, such a $\FuTS$ over~$\nnreals$ is called a (reactive) probabilistic transition system.
  
  In Section~\ref{sec-iml} we will provide semantics for the process language~$\IML$ of interactive Markov \linebreak chains~\cite{Her02:springer,HK10:fmco} using $\FuTS$.
  Unlike many other stochastic process algebras, a single $\IML$ process can in general both perform action-based transitions and time-delays governed by exponential distributions.
  %% Figure~\ref{fig-iml-example} depicts part of a $\FuTS$ for the process $P = \lambda.P_1 + \mu.P_1 + \varrho.P_2 + a.P_3 + b.P_4 + a.P_4$.
\end{examples}

\begin{figure}
  \centering
  \scalebox{0.90}{%  
  \includegraphics[scale=0.60]{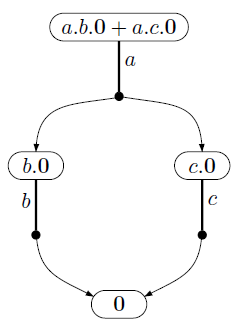}
  \qquad
  \includegraphics[scale=0.60]{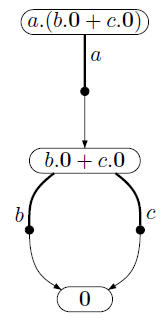}
  \qquad \qquad
  \includegraphics[scale=0.55]{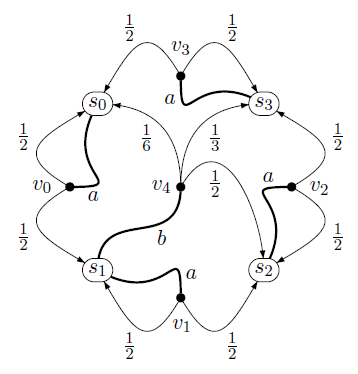}
  } %% scalebox
  \halflineup
  \halflineup
  \caption{$\FuTS$ for two $\CCS$ processes and a probabilistic process.}
  \label{fig-examples}
\end{figure}

%\begin{figure}
%  \centering
%%  \includegraphics[scale=0.60]{ex-iml}
%  \Large To be included
%  \caption{Partial $\FuTS$ for $\IML$ process~$P = a.P_1 + b.P_1 + a.P_2 + \lambda.P_3 + \mu.P_3 + \varrho.P_4$.}
%  \label{fig-iml-example}
%\end{figure}

\blankline

\noindent
  It will be notationally convenient to consider a (total and deterministic) $\FuTS$ as a tuple $( \, S ,\, \la \mkern-2mu {\theta_i} \ra^n_{i = 1} \, )$ with transition functions $\theta_i : S \to \calLi \to \fsfn{\mkern2mu S}{\calRi}$, $i = 1 \ldots n$, rather than using the form $( \, S ,\, \la \mkern-2mu {\mtrans{}_i} \ra^n_{i = 1} \, )$ that occurs more frequent for concrete examples in the literature. 
  Alternatively, using disjoint unions, one could see a $\FuTS$ represented by a function $\theta' : S \to \bigoplus_{i=1}^n \: \calLi \to \bigoplus_{i=1}^n \: \fsfn{S}{\calRi}$ satisfying the additional property that $\theta'(s)( \myell ) \in \fsfn{S}{\calRi}$ if $\ell \in \calLi$.
  As this fits less smoothly with the category-theoretical approach of Section~\ref{sec-coalgebra}, we stick to the former format.
  Note, an interpretation of a $\FuTS$ as a function $S \to \bigoplus_{i=1}^n \: \bigl( \, \calLi \to \fsfn{S}{\calRi} \, \bigr)$ does not suit our purposes as the $\IML$ example above illustrates.
  
  We will use the notation with transition functions $\thetai : S \to \calLi \to \fsfn{\mkern2mu S}{\calRi}$ to introduce the notion of bisimilarity for a $\FuTS$.

\blankline

\begin{definition}
  \label{df-ltfs-bisim}
  Let $\calS = ( \, S ,\, \la \mkern-2mu {\theta_i} \ra^n_{i = 1} \, )$ be a $\FuTS$ over the label sets $\calLi$ and semirings~$\calRi$, $i = 1 \ldots n$.
  An equivalence relation $R \subseteq S \times S$ is called an $\calS$-bisimulation if $R(s_1,s_2)$ implies
\begin{equation}
  \tssum_{t' \in \Rclass{t}} \; \theta_i \mkern2mu (s_1)(\ell \mkern2mu )(t' \mkern1mu )
  =
  \tssum_{t' \in \Rclass{t}} \; \theta_i \mkern2mu (s_2)(\ell \mkern2mu )(t' \mkern1mu )
  \label{eq-ltfs-bisim}
\end{equation}%
for all $t \in S$, $i = 1 \ldots n$ and $\ell \in \calLi$.
  Two elements $s_1, s_2 \in S$ are called $\calS$-bisimilar if $R(s_1,s_2)$ for some $\calS$-bisimulation~$R$ for~$\calS$.
  Notation $s_1 \simS s_2$.
\endef
\end{definition}

\noindent
  We use the notation $\Rclass{t}$ to denote the equivalence class of~$t \in S$ with respect to~$R$.
  Note that sums in equation~(\ref{eq-ltfs-bisim}) exist since the functions $\theta_i \mkern2mu (s_j)(\ell \mkern2mu ) \in \fsfn{\mkern2mu S}{\calRi}$, $i = 1 \ldots n$, $j = 1,2$, are of finite support. 
  Hence, $\theta_i \mkern2mu (s_j)(\ell \mkern2mu )(t') = 0 \in \calRi$ for all but finitely many~$t' \in \Rclass{t} \subseteq S$.
  
\blankline

\noindent
  For the combined $\FuTS$ of the two $\CCS$-processes of Figure~\ref{fig-examples}, the obvious equivalence relation relating $b.\bfzero$ and~$b.\bfzero \cho c.\bfzero$ is not a $\FuTS$-bisimulation.
  Although $\tssum_{t' \in \Rclass{\bfzero}} \; \mytheta(b.\bfzero)(b)(t') = \theta(b.\bfzero)(b)(\bfzero) = \TRUE$ and $\tssum_{t' \in \Rclass{\bfzero}} \; \theta(b.\bfzero \cho c.\bfzero)(b)(t') = \theta(b.\bfzero \cho c.\bfzero)(b)(\bfzero) = \TRUE$, we have $\tssum_{t' \in \Rclass{\bfzero}} \; \theta(b.\bfzero)(c)(t') = \FALSE$, while $\tssum_{t' \in \Rclass{\bfzero}} \; \theta(b.\bfzero \cho c.\bfzero)(c)(t') = \TRUE$, taking sums, i.e.\ disjunctions, in~$\bools$.
  
% \noindent
%For labeled transition-to-function systems that are not total or not deterministic a variantion of Definition~\ref{df-ltfs-bisim} applies involving the usual transfer conditions: if $R(s_1,s_2)$ then
%\begin{displaymath}
%\begin{array}{l}
%  s_1 \mtrans{\ell}_i v_1 \in \fsfn{\mkern2mu S}{\calRi}  
%  \Longrightarrow
%  \exists v_2 \colon
%  s_2 \mtrans{\ell}_i v_2 \in \fsfn{\mkern2mu S}{\calRi}  
%  \land
%  \tilde{R}(v_1,v_2)
%  \\
%  s_2 \mtrans{\ell}_i v_2 \in \fsfn{\mkern2mu S}{\calRi}  
%  \Longrightarrow
%  \exists v_1 \colon
%  s_1 \mtrans{\ell}_i v_1 \in \fsfn{\mkern2mu S}{\calRi}  
%  \land
%  \tilde{R}(v_1,v_2)
%\end{array}
%\end{displaymath}%%
%where the lifting~$\tilde{R}$ on $\fsfn{\mkern2mu S}{\calRi} \times \fsfn{\mkern2mu S}{\calRi}$ is given by
%\begin{displaymath}
%  \tilde{R}(v_1,v_2) 
%  \ \iff \ 
%  \tssum_{t' \in \Rclass{t}} \; v_1(t')
%  =
%  \tssum_{t' \in \Rclass{t}} \; v_2(t')
%\end{displaymath}%%
%As we will consider total and deterministic labeled state-to-function transition systems only, we stick to the more convenient formulation involving transition functions.

%% file coalgebra.tex
%!TEX root = accat.tex

\section{$\FuTS$ coalgebraically}
\label{sec-coalgebra}

  In this section we will cast $\FuTS$ in the framework of coalgebras and prove a correspondence result of $\FuTS$-bisimulation and behavioral equivalence for a suitable functor on~$\Set$.

\blankline

\begin{definition}
  \label{df-v-functor}
  Let $\calL$ be a set of labels and let~$\calR$ be a semiring. 
  The functor $\calVLR : \Set \to \Set$ assigns to a set~$X$ the function space $\fsfn{X}{\calR}^\calL$ of all functions $\varphi : \calL \to \fsfn{X}{\calR}$ and assigns to a function $f : X \to Y$ the mapping $\calVLR(\myf) : \fsfn{X}{\calR}^\calL \to \fsfn{Y}{\calR}^\calL$ where 
\begin{displaymath}
  \calVLR(\myf)(\varphi)(\myell)(y)
  =
  \tssum_{x' \in \finv(y)} \  \varphi(\myell)(x')
\end{displaymath}
for all $\varphi \in \fsfn{X}{\calR}^\calL$, $\ell \in \calL$ and~$y \in Y$.
\endef
\end{definition}

\blankline

\noindent
  Again we rely on~$\varphi(\myell) \in \fsfn{X}{\calR}$ having a finite support for the sum to exist and for $\calVLR$ being well-defined.
  In fact, we have $\spt( \, \calVLR( \myf )(\varphi)(\myell) \, ) = \ZSET{ f(x) }{ x \in \spt(\varphi)(\myell) }$.
  
  As we aim to compare our notion of bisimulation for~$\FuTS$ with behavioral equivalence for the functor~$\calVLR$, given a set of labels~$\calL$ and a semiring~$\calR$, we need to check that $\calVLR$ possesses a final coalgebra. 
  We follow the approach of~\cite{BBBRS11}. 
  
\blankline

\begin{lemma}
\label{lm-v-is-bounded}
  Let~$\calL$ be a set of labels, $\calR$~a semiring.
  Then the functor~$\calVLR$ on~$\Set$ is bounded.
  \qed
\end{lemma}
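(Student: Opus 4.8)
\; The plan is to invoke the sufficient condition for boundedness recalled above: I would exhibit sets $A$ and $B$ together with a natural transformation $\eta : A \times ( {\cdot} )^B \Rightarrow \calVLR$ whose components are surjective. The guiding observation is that an element of $\calVLR(X) = \fsfn{X}{\calR}^{\calL}$ is a family $(\varphi(\ell))_{\ell \in \calL}$ in which each $\varphi(\ell) \in \fsfn{X}{\calR}$ is a finite $\calR$-weighted collection of points of~$X$. Such a family is completely determined by the choice, for every label, of an enumeration of the finitely many relevant points of~$X$ together with their coefficients in~$\calR$. Accordingly I would take the index set $B = \calL \times \nats$, so that $X^B$ supplies for each label~$\ell$ a sequence $n \mapsto g(\ell,n)$ of points of~$X$, and the parameter set $A = \fsfn{\nats}{\calR}^{\calL}$, so that $a \in A$ supplies for each label a finite-support sequence of coefficients $n \mapsto a(\ell)(n)$.

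Given this data I define, for every set~$X$,
\[
  \eta_X(a,g)(\ell)(x) \;=\; \tssum_{n \,:\, g(\ell,n) = x} a(\ell)(n) ,
\]
for $a \in A$, $g \in X^B$, $\ell \in \calL$ and $x \in X$. The first step is to check this lands in $\calVLR(X)$: since $a(\ell)$ has finite support the displayed sum ranges over a finite index set, and $\spt(\eta_X(a,g)(\ell)) \subseteq \ZSET{ g(\ell,n) }{ n \in \spt(a(\ell)) }$ is finite, so $\eta_X(a,g)(\ell) \in \fsfn{X}{\calR}$ for every~$\ell$ and $\eta_X(a,g) \in \calVLR(X)$.

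Next I would establish surjectivity of each component $\eta_X$ for $X \neq \emptyset$, the case $X = \emptyset$ being degenerate since $\calVLR(\emptyset)$ is then a singleton. Fix $x_0 \in X$ and let $\varphi \in \calVLR(X)$. For each~$\ell$ enumerate $\spt(\varphi(\ell))$ as $x^{\ell}_0, \ldots, x^{\ell}_{k_\ell - 1}$, and set $g(\ell,n) = x^{\ell}_n$, $a(\ell)(n) = \varphi(\ell)(x^{\ell}_n)$ for $n < k_\ell$, and $g(\ell,n) = x_0$, $a(\ell)(n) = 0$ for $n \geq k_\ell$. Then $a \in A$ (finite support within each label) and a direct check gives $\eta_X(a,g) = \varphi$, the filler points contributing nothing because their coefficients are~$0$.

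It remains to verify naturality, which I expect to be the crux. For $f : X \to Y$ I must show $\calVLR(f) \compose \eta_X = \eta_Y \compose (A \times f^B)$, i.e.\ $\calVLR(f)(\eta_X(a,g)) = \eta_Y(a, f \compose g)$. Unfolding Definition~\ref{df-v-functor} turns the left-hand side at $(\ell,y)$ into the double sum $\tssum_{x' \in \finv(y)} \tssum_{n \,:\, g(\ell,n)=x'} a(\ell)(n)$, while the right-hand side is the single sum $\tssum_{n \,:\, f(g(\ell,n)) = y} a(\ell)(n)$. The main obstacle is precisely the bookkeeping of these sums: one partitions the index set $\ZSET{ n }{ f(g(\ell,n)) = y }$ according to the value $x' = g(\ell,n) \in \finv(y)$ and regroups, which is legitimate because only finitely many indices carry a nonzero coefficient and $+$ is commutative and associative in~$\calR$. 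Carrying out this rearrangement cleanly while tracking the finite-support conditions that keep every sum finite is the one genuinely careful step; the remaining verifications are routine.
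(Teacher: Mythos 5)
Your proposal is correct, and it follows the same overall strategy as the paper: invoke the characterization of boundedness by exhibiting sets $A$, $B$ and a surjective natural transformation $\eta : A \times ({\cdot})^B \Rightarrow \calVLR$, then check well-definedness (finite supports), naturality (regrouping a double sum), and surjectivity (enumerate supports, pad with zero coefficients). The one substantive difference is your choice of exponent: you take $B = \calL \times \nats$, so the selection function $g \in X^B$ may pick points of~$X$ \emph{separately for each label}, whereas the paper takes $B = \nats$ with a single selection function $\sigma \in X^{\nats}$ shared by all labels. This difference is not cosmetic, and it works in your favor. In the paper's surjectivity step the selection function is defined by $\sigma(n) = x^{\mkell}_n$, which depends on~$\ell$ and is therefore not well defined as a single function $\nats \to X$. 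Worse, for uncountable~$\calL$ (and nontrivial~$\calR$) the shared-selection scheme cannot be repaired by any choice of~$A$: by naturality applied to $g : \nats \to X$, every element in the image of $\eta_X$ equals $\calVLR(g)(\psi)$ for some $\psi \in \calVLR(\nats)$, hence all its supports $\spt(\varphi(\myell))$, $\ell \in \calL$, lie inside the countable set $g(\nats)$; but taking $X = \calL$ and $\varphi(\myell) = \chut_{\mkern1mu \ell}$ forces $\bigcup_{\ell} \spt(\varphi(\myell)) = \calL$, which is uncountable. Your per-label selection functions avoid exactly this obstruction, so your argument proves the lemma in full generality where the paper's literal construction does not. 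One small caveat, which both proofs share: at $X = \emptyset$ one has $A \times \emptyset^B = \emptyset$ while $\calVLR(\emptyset)$ is a singleton, so surjectivity of the component $\eta_\emptyset$ fails; this is absorbed by the convention in the cited characterization (surjectivity is only needed on nonempty carriers). Your parenthetical dismissal of $X = \emptyset$ should appeal to that convention rather than to the codomain being a singleton, since a singleton codomain with empty domain is precisely a failure of surjectivity, not a degenerate success.
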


\blankline

\noindent
  Working with total and deterministic $\FuTS$, we can interpret a $\FuTS$ $\calS = ( \, S ,\, \la \mkern-2mu {\theta_i} \ra^n_{i = 1} \, )$ over the label sets $\calLi$ and semirings~$\calRi$, $i = 1 \ldots n$ as a product $\theta_1 \times \cdots \times \theta_n : S \to \prod_{i=1}^n \: ( \, \calLi \to \fsfn{S}{\calRi} \, )$ of functions $\thetai : S \to \calLi \to \fsfn{S}{\calRi}$.
  To push this idea a bit further, we want to consider the $\FuTS$~$\calS = ( \, \calS ,\, \la \mkern-2mu {\thetai} \ra^n_{i = 1} \,)$ as a coalgebra of a suitable product functor on~$\Set$. 
  
\blankline
  
\begin{definition}
\label{df-futs-functor}
  Let $\calS = ( \, S ,\, \la \mkern-2mu {\theta_i} \ra^n_{i = 1} \, )$ be a $\FuTS$ over the label sets $\calLi$ and semirings~$\calRi$, $i = 1 .. n$. The functor~$\calVS$ on~$\Set$ is defined by $\calVS = \prod_{i=1}^n \: \calV_{\calRi}^{\mkern1mu \calLi} = \prod_{i=1}^n \: \fsfn{ {\, \cdot \,} }{\calRi}^{\calLi}$.
\end{definition}

\blankline

\noindent
  The point is, under conditions that are generally met, coalgebras come equipped with a natural notion of behavioral equivalence that can act as a reference for strong equivalences, in particular of  bisimulation for~$\FuTS$.
  Below, see Theorem~\ref{th-correspondence}, we prove that $\calS$-bisimilarity as given by Definition~\ref{df-ltfs-bisim} coincides with behavioral equivalence for the functor~$\calVS$ as given by Definition~\ref{df-futs-functor}, providing justification for the notion of equivalence defined on~$\FuTS$.
   
  For the notion of behavioral equivalence for the functor~$\calVS$ obtained from~$\calS$ to be defined, we establish that it possesses a final coalgebra.
  
\begin{theorem}
\label{th-v-has-final-coalgebra}
  The functor $\calVS$ has a final coalgebra.
\end{theorem}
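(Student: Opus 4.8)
The plan is to reduce the statement to the two preceding results: Lemma~\ref{lm-v-is-bounded}, which asserts that each factor $\calV_{\calRi}^{\calLi}$ is bounded, and Theorem~\ref{th-existence-final-coalgebra}, which guarantees a final coalgebra for any bounded functor on~$\Set$. Since $\calVS$ is by Definition~\ref{df-futs-functor} the finite product $\prod_{i=1}^n \calV_{\calRi}^{\calLi}$, it suffices to show that a finite product of bounded functors is again bounded; then Theorem~\ref{th-existence-final-coalgebra} applies directly. So the whole proof is really an instance of the closure of bounded functors under finite products.

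Concretely, by Lemma~\ref{lm-v-is-bounded} I would fix, for each $i = 1 \ldots n$, sets $A_i$ and $B_i$ together with a surjective natural transformation $\eta_i : A_i \times ( {\cdot} )^{B_i} \Rightarrow \calV_{\calRi}^{\calLi}$. The idea is to assemble these into a single bound for the product. I would take $A = \prod_{i=1}^n A_i$ and let $B = \coprod_{i=1}^n B_i$ be the disjoint union of the $B_i$. The key observation is the natural isomorphism $X^B = X^{\coprod_i B_i} \cong \prod_i X^{B_i}$, expressing that a function out of a coproduct is a tuple of functions; together with the reshuffling of the two finite products this yields a natural isomorphism $A \times ( {\cdot} )^B \cong \prod_{i=1}^n \bigl( A_i \times ( {\cdot} )^{B_i} \bigr)$. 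Composing this isomorphism with the product transformation $\prod_{i=1}^n \eta_i$ produces a natural transformation $\eta : A \times ( {\cdot} )^B \Rightarrow \prod_{i=1}^n \calV_{\calRi}^{\calLi} = \calVS$.

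To finish, I would check that $\eta$ is surjective at every set~$X$. Its $X$-component factors as an isomorphism followed by the product of the maps $(\eta_i)_X$; since a finite product of surjections is surjective and each $(\eta_i)_X$ is onto, $\eta_X$ is onto as well. Naturality of $\eta$ follows from naturality of each $\eta_i$ and of the exponential isomorphism, both routine diagram chases. This exhibits $\calVS$ as a bounded functor, and Theorem~\ref{th-existence-final-coalgebra} then delivers the final coalgebra of~$\calVS$.

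I expect no genuine obstacle here, as the argument is essentially the standard closure-under-finite-products property for bounded functors. The only point needing a little care is the bookkeeping in the isomorphism $A \times ( {\cdot} )^B \cong \prod_{i=1}^n \bigl( A_i \times ( {\cdot} )^{B_i} \bigr)$ and the confirmation that the assembled transformation stays both natural and surjective. Because $n$ is finite, all these manipulations remain entirely within $\Set$ and raise no set-theoretic subtleties.
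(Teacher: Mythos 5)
Your proof is correct, but it takes a genuinely different route from the paper — and, notably, a sounder one. The paper's own proof applies Lemma~\ref{lm-v-is-bounded} and Theorem~\ref{th-existence-final-coalgebra} to each factor $\calV_{\calRi}^{\mkern1mu \calLi}$ separately, obtaining final coalgebras $\Omega_{\calV_{\calRi}^{\mkern1mu \calLi}}$, and then asserts that $\OmegaS$ is their Cartesian product, with the final $\calVS$-morphism being the product of the component-wise final morphisms. You instead show that $\calVS$ itself is bounded, by assembling the witnesses $A_i$, $B_i$, $\eta_i$ of the factors into $A = \prod_i A_i$, $B = \coprod_i B_i$ and the composite of the natural isomorphism $A \times ({\cdot})^B \cong \prod_i \bigl( A_i \times ({\cdot})^{B_i} \bigr)$ with $\prod_i \eta_i$, and then you invoke Theorem~\ref{th-existence-final-coalgebra} once for $\calVS$. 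Your verification that this composite is natural and componentwise surjective is routine and correct, so your argument is complete.

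What your route buys is more than elegance: the paper's claimed description of $\OmegaS$ is in fact false in general, because the behavior of a state in a $\calVS$-coalgebra is not determined by the tuple of its behaviors in the $n$ separate component coalgebras. Take $n = 2$, both label sets singletons, both semirings $\bools$, and states with $\theta_1(x) = [\, y \mapsto \TRUE \,]$, $\theta_2(y) = [\, z \mapsto \TRUE \,]$, $\theta_1(x') = [\, y' \mapsto \TRUE \,]$, all other values the zero function. Then $x$ and $x'$ have identical $\theta_1$-behaviors and identical $\theta_2$-behaviors, so the product of the component final morphisms identifies them; yet they are not $\calS$-bisimilar (after the $\theta_1$-step, $y$ admits a $\theta_2$-transition while $y'$ does not), hence by Theorem~\ref{th-correspondence} they are not behaviorally equivalent. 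Indeed, one can check directly that no coalgebra structure on the product of the component final coalgebras makes all such product maps homomorphisms, so the paper's displayed identities cannot be "straightforwardly shown" — they contradict Theorem~\ref{th-correspondence}. Your proof makes no claim about the internal structure of $\OmegaS$; existence is all that the definition of $\approxS$ and the correspondence theorem require, and that is exactly what the closure of bounded functors under finite products delivers.
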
 
\begin{proof}
 By Lemma~\ref{lm-v-is-bounded} we have that each factor $\calV_{\calRi}^{\mkern1mu \calLi}$ of~$\calVS$ is bounded, and hence possesses a final coalgebra~$\Omega_{\calV_{\calRi}^{\mkern1mu \calLi}}$ by Theorem~\ref{th-existence-final-coalgebra}.
 It follows that also $\calVS$ has a final coalgebra~$\OmegaS$. 
 Writing $\fmorphS{X}{\cdot}$ for the final morphism of a $\calVS$-coalgebra~$X$ into~$\OmegaS$, we have 
\begin{displaymath}
  \OmegaS = \Omega_{\calV_{\calRone}^{\mkern1mu \calLone}} \times \cdots \times \Omega_{\calV_{\calRn}^{\mkern1mu \calLn}}
  \quad \text{and} \quad
  \fmorphS{X}{\cdot} = \fmorph{\calV_{\calRone}^{\mkern1mu \calLone}}{X}{\cdot} \times \cdots \times \: \fmorph{\calV_{\calRn}^{\mkern1mu \calLn}}{X}{\cdot}
\end{displaymath}
as can be straightforwardly shown.
\end{proof}

\noindent
  Since the functor~$\calVS$ of a $\FuTS$ has a final coalgebra, we can speak of the behavioral equivalence~$\approxS$ induced by~$\calVS$.
  Next we establish, for a $\FuTS$~$\calS$, the correspondence of $\calS$-bisimulation~$\simS$ as given by Definition~\ref{df-ltfs-bisim} and behavioral equivalence~$\approxS$.
  
\begin{theorem}
  \label{th-correspondence}
  Let $\calS = ( \, S ,\, \la \mkern-2mu {\theta_i} \ra^{\, n}_{i = 1} \, )$ be a $\FuTS$ over the label sets $\calLi$ and semirings~$\calRi \mkern1mu$, $i = 1 \ldots n$.
  Then $s_1 \simS s_2 \IFF s_1 \approxS s_2$, for all $s_1, s_2 \in S$.
\end{theorem}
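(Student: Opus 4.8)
The plan is to establish the two implications separately, using throughout the observation that ties the notions together: for a function $f \colon X \to Y$ the morphism part $\calVLR(\myf)(\varphi)(\myell)(y) = \tssum_{x' \myin \finv(y)} \varphi(\myell)(x')$ cumulatively sums $\varphi(\myell)$ over the fibre $\finv(y)$, which is exactly the shape of the class-sums $\tssum_{t' \myin \Rclass{t}}$ occurring in the bisimulation condition~(\ref{eq-ltfs-bisim}). Throughout I read $\calS$ as the $\calVS$-coalgebra $(S,\theta)$ with $\theta = \theta_1 \times \cdots \times \theta_n$, and I use that $s_1 \approxS s_2$ means $\fmorphS{S}{s_1} = \fmorphS{S}{s_2}$ for the final morphism $\fmorphS{S}{\cdot} \colon (S,\theta) \to (\OmegaS,\omega)$ provided by Theorem~\ref{th-v-has-final-coalgebra}.

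For $s_1 \simS s_2 \IMPL s_1 \approxS s_2$ I would fix an $\calS$-bisimulation $R$ with $R(s_1,s_2)$ and take the quotient map $q \colon S \to S/R$, $s \mapsto \Rclass{s}$. Because $q^{-1}(\Rclass{t}) = \Rclass{t}$, condition~(\ref{eq-ltfs-bisim}) states exactly that $\calVS(q) \compose \theta$ is constant on the fibres of $q$, hence factors as $\calVS(q) \compose \theta = \bar\theta \compose q$ through a unique $\bar\theta \colon S/R \to \calVS(S/R)$; this is precisely the assertion that $q$ is a $\calVS$-coalgebra homomorphism $(S,\theta) \to (S/R,\bar\theta)$. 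Finality then gives $\fmorphS{S}{\cdot} = \fmorphS{S/R}{\cdot} \compose q$, and since $R(s_1,s_2)$ yields $q(s_1) = q(s_2)$, we obtain $\fmorphS{S}{s_1} = \fmorphS{S}{s_2}$, i.e.\ $s_1 \approxS s_2$.

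For the converse I would put $h = \fmorphS{S}{\cdot}$ and let $R = \ZSET{(s,s')}{h(s) = h(s')}$ be its kernel, which is an equivalence relation. Reading the homomorphism identity $\calVS(h) \compose \theta = \omega \compose h$ in its $i$-th component and unfolding $\calV_{\calRi}^{\mkern1mu \calLi}(h)$ by Definition~\ref{df-v-functor} gives $\tssum_{t' \myin h^{-1}(w)} \theta_i(s)(\ell)(t') = \omega_i(h(s))(\ell)(w)$ for all $s \in S$, $\ell \in \calLi$ and $w \in \OmegaS$. When $R(s_1,s_2)$, the right-hand sides for $s_1$ and $s_2$ agree because $h(s_1) = h(s_2)$; choosing $w = h(t)$ and using $h^{-1}(h(t)) = \Rclass{t}$ turns the resulting equality of left-hand sides into exactly~(\ref{eq-ltfs-bisim}). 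Hence $R$ is an $\calS$-bisimulation with $R(s_1,s_2)$, so $s_1 \simS s_2$.

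The step I expect to be the main obstacle is making the fibres match the equivalence classes. It is essential to keep $h$ as a single function $S \to \OmegaS$, so that its fibres $h^{-1}(w)$ are exactly the $R$-classes $\Rclass{t}$; splitting $h$ into the component morphisms into the factors $\Omega_{\calV_{\calRi}^{\mkern1mu \calLi}}$, as in the proof of Theorem~\ref{th-v-has-final-coalgebra}, would produce coarser fibres and break the match. This is precisely where the absence of multiplicities is pivotal: since each continuation $\theta_i(s)(\ell)$ is a genuine finite-support function rather than a multiset, the cumulative sum that $\calVLR(h)$ performs over a fibre coincides with the bisimulation sum over an $R$-class with no double counting, so the two equalities are literally identical.
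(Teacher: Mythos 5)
Your proof is correct and follows essentially the same route as the paper: the forward direction builds the quotient coalgebra on $S/R$ (your $\bar\theta$ is exactly the paper's $\theta_R$, given by the same class-sums) and appeals to uniqueness of the final morphism, while the reverse direction takes the kernel of the final morphism $h = \fmorphS{S}{\cdot}$ and unfolds the homomorphism identity at $w = \fmorphS{S}{t}$, precisely as the paper does. The only cosmetic difference is that you obtain the quotient structure by factoring $\calVS(q) \compose \theta$ through the surjection $q$, whereas the paper defines $\theta_R$ explicitly and checks well-definedness --- the same argument in different words.
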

\begin{proof}
  Let $s_1, s_2 \in S$.
  We first prove $s_1 \simS s_2 \IMPL s_1 \approxS s_2$.
  So, assume $s_1 \simS s_2$.
  Let $R \subseteq S \times S$ be an $\calS$-bisimulation with $R(s_1,s_2)$.
  Put $\theta = \theta_{\mkern0mu 1} \times \cdots \times \theta_{\mkern1mu n}$.
  Note $\twotuple{S}{\mytheta}$ is a $\calVS$-coalgebra.
  We turn the collection of equivalence classes~$S/R$ into a $\calVS$-coalgebra $(S/R,\theta_R)$ by putting 
\begin{displaymath}
  \theta^\myi_{R}( \, \Rclass{s} \, )(\myell)( \, \Rclass{t} \, ) = 
  \textstyle{\sum}_{t' \in \Rclass{t}} \; \theta_\myi(s)(\myell)(t')
  \quad \text{and} \quad
  \theta_R = \theta^{\mkern2mu 1}_{R} \times \cdots \times \theta^{\mkern2mu n}_{R}
\end{displaymath}%
for $s, t \in S$, $\ell \in \calLi$, $i = 1 \ldots n$.
  This is well-defined since $R$ is an $\calS$-bisimulation: if $R(s,s')$ then we have $\textstyle{\sum}_{t' \in \Rclass{t}} \; \theta_\myi(s)(\myell)(t') = \textstyle{\sum}_{t' \in \Rclass{t}} \; \theta_\myi(s')(\myell)(t')$.
  The canonical mapping $\varepsilon_R : S \to S/R$ is a $\calVS$-homomorphism: 
  For $i = 1 \ldots n$, $\ell \in \calLi$ and $t \in S$, we have both 
\begin{displaymath}
  \fsfn{\varepsilon_R}{\calRi}^{\calLi} ( \, \thetai(s) \,)(\myell)(\Rclass{t}) = \tssum_{t' \myin \Rclass{t}} \; \theta_\myi(s)(\myell)(t') 
  \quad \text{and} \quad 
  \theta^{\mkern3mu i}_R( \Rclass{s} )( \myell )( \Rclass{t} ) = \tssum_{t' \myin \Rclass{t}} \; \theta_\myi(s)(\myell)(t')
\end{displaymath}
  Thus, $\fsfn{\varepsilon_R}{\calRi}^\calLi \compose \theta_\myi = \theta^{\mkern3mu i}_R \compose \varepsilon_R$.
  Since $\calVS(\varepsilon_R) = \prod_{i=1}^n \: \fsfn{\varepsilon_R}{\calRi}^{\calLi}$ it follows that $\varepsilon_R$ is a $\calVS$-homo\-morphism.
  Therefore, by uniqueness of a final morphism, we have $\fmorphS{S}{\cdot} = \fmorphS{S/R}{\cdot} \compose \, \varepsilon_R$.
  In particular, $\fmorphS{S}{s_1} = \fmorphS{S}{s_2}$ since $\varepsilon_R(s_1) = \varepsilon_R(s_2)$. 
  Thus, $s_1 \approxS s_2$.
  
  For the reverse, i.e.\ $s_1 \approxS s_2 \IMPL s_1 \simS s_2$, assume $s_1 \approxS s_2$, i.e.\ $\fmorphS{S}{s_1} = \fmorphS{S}{s_2}$.
  Since the map $\fmorphS{S}{\cdot} : \twotuple{S}{\mytheta} \to \twotuple{\OmegaS}{\omega_{\calS}}$ is a $\calVS$-homomorphism, the relation~$R_{\calS}$ with $R_{\calS}(s',s'') \IFF \fmorphS{S}{s'} = \fmorphS{S}{s''}$ is an $\calS$-bisimulation:
  Suppose $R_{\calS}(s',s'')$, i.e.\ $s' \approxS s''$, for some $s',s'' \in \calS$.
  Assume $\theta_\OmegaS = \theta^1_\OmegaS \times \cdots \times \theta^{\mkern2mu n}_\OmegaS$.
  Pick $1 \leqslant i \leqslant n$, $\ell \in \calLi$, $t \in S$. 
  Put $\fmorphS{S}{t} = w \in \OmegaS$.
  Let $\Sclass{t}$ denote the equivalence class of~$t$ in~$R_{\calS}$.
\begin{displaymath}
\def\arraystretch{1.2}
  \begin{array}{rcll}
  \multicolumn{4}{l}{\tssum_{t' \in \Sclass{t}} \; \theta_\myi(s')(\myell)(t')}
  \\ & = &
  \tssum_{t' \in ( \, \fmorphS{S}{\cdot} \, )^{\mkern0mu -1} (w)} \; \theta_\myi (s')(\myell)(t')
  & \text{(by definition of~$R_{\calS}$ and~$w$)}
  \\ & = &
  \theta_{\OmegaS}^\myi ( \, \fmorphS{S}{s'} \mkern4mu \, )( \myell )( w )
  & \text{($\, \fmorphS{S}{\cdot}$ is a $\calVS$-homomorphism)}
  \\ & = &
  \theta_{\OmegaS}^\myi ( \, \fmorphS{S}{s''} \, )( \myell )( w )
  & \text{($s' \approxS s''$ by assumption)}
  \\ & = &
  \tssum_{t' \in ( \, \fmorphS{S}{\cdot} \, )^{\mkern0mu -1} (w)} \; \theta_\myi (s'')(\myell)(t')
  & \text{($\, \fmorphS{S}{\cdot}$ is a $\calVS$-homomorphism)}
  \\ & = &
  \tssum_{t' \in \Sclass{t}} \; \theta_\myi(s'')(\myell)(t') 
  & \text{(by definition of~$R_{\calS}$ and~$w$)}
  \end{array}
\def\arraystretch{1.0}
\end{displaymath}%
  Thus, if $R_{\calS}(s',s'')$ then $\tssum_{t' \in \Sclass{t}} \; \theta_\myi(s')(\myell)(t') = \tssum_{t' \in \Sclass{t}} \; \theta_\myi(s'')(\myell)(t')$ for all $i = 1 \ldots n$, $t \in S$, $\ell \in \calLi$ and $R_{\calS}$ is an $\calS$-bisimulation.
  Since $\fmorphS{S}{s_1} = \fmorphS{S}{s_2}$, it follows that $R_{\calS}(s_1,s_2)$.
  Thus $R_{\calS}$ is an $\calS$-bisimulation relating $s_1$ and~$s_2$.
  Conclusion, it holds that $s_1 \simS s_2$.
\end{proof}

%% \input{acp}
%% file pepa.tex
%!TEX root = accat.tex

\section{$\FuTS$ Semantics of $\PEPA$}
\label{sec-pepa}

  Next we will consider a significant fragment of the process algebra $\PEPA$~\cite{Hil96:phd}, including the parallel operator implementing the scheme of so-called minimal apparent rates, and provide a $\FuTS$ semantics for it.
  We will show that $\PEPA$'s notion of equivalence~$\eqpepa \mkern1mu$, called strong equivalence in~\cite{Hil96:phd}, fits with the bisimilarity~$\simpepa$ as arising from the $\FuTS$ semantics.

\blankline

\begin{definition}
  The set $\prcPEPA$ of $\PEPA$ processes is given by the BNF
\begin{math}
  P \bnfeq \nil \mid \alambda.P \mid P + P \mid P \prlA P \mid X
\end{math}
  where $a$~ranges over the set of actions~$\calA$, $\lambda$ over~$\poreals$, $A$~over the set of finite subsets of~$\calA$, and $X$~over the set of constants~$\calX$.
  \endef
\end{definition}

\blankline

\noindent
  $\PEPA$, like many other stochastic process algebras (e.g.\ \cite{HHM98:cnis,BG98:tcs}), couples actions and rates. 
  The prefix $\alambda$ of the process $\alambda.P$ expresses that the duration of the execution of the action~$a \in \calA$ is sampled from an exponential distribution of rate~$\lambda$.
  The parallel composition $P \prlA Q$ of a process~$P$ and a process~$Q$ for a set of actions~$A \subseteq \calA$ allows for the independent, asynchronous execution of actions of $P$ and~$Q$ not occurring in the subset~$A$, on the one hand, and requires the simultaneous, synchronized execution of $P$ and~$Q$ for the actions occurring in~$A$, on the other hand.
  The $\FuTS$-semantics of the fragment of $\PEPA$ that we consider here, is given by the SOS of Figure~\ref{fig-pepa-rules}, on which we comment below.  
  
  Characteristic for the $\PEPA$ language is the choice to model parallel composition, or cooperation in the terminology of $\PEPA$,  scaled by the minimum of the so-called apparent rates.
  By doing so, $\PEPA$'s strong equivalence becomes a congruence~\cite{Hil96:phd}.
  Intuitively, the apparent rate $r_a(P)$ of an action~$a$ for a process~$P$ is the sum of the rates of all possible $a$-executions for~$P$.
  When considering the CSP-style parallel composition $P \prlA Q$, with cooperation set~$A$, an action~$a$ occurring in~$A$ has to be performed by both $P$ and~$Q$.
  The rate of such an execution is governed by the slowest, on average, of the two processes in this respect.\footnote{One cannot take the slowest process per sample, because such an operation cannot be expressed as an exponential distribution in general.}
  Thus $r_a( \mkern1mu P \prlA Q \mkern1mu )$ for $a \in A$ is the minimum $\min \SET{ \, r_a(P), \, r_a(Q) \, }$.
  Now, if $P$~schedules an execution of~$a$ with rate~$r_1$ and $Q$~schedules a transition of~$a$ with rate~$r_2$, in the minimal apparent rate scheme the combined execution yields the action~$a$ with rate $r_1 \cdot r_2 \cdot \arf ( P , Q )$.
  Here, the `syntactic' scaling factor $\arf(P,Q)$, the apparent rate factor, is defined by
\begin{displaymath}
\def\arraystretch{1.2}
  \arf(P,Q) =
\begin{array}{c}
  \min \SET{ \, r_a(P), \, r_a(Q) \, }
  \\ \hline 
  r_a(P) \cdot r_a(Q)
\end{array}
\def\arraystretch{1.0}
\end{displaymath}
assuming $r_a(P), r_a(Q) > 0$, otherwise $\arf(P,Q) = 0$.
Thus, for $P \prlA Q$ the minimum $\min \SET{ \, r_a(P), \, r_a(Q) \, }$ of the apparent rates is adjusted by the relative probabilities $r_1/r_a(P)$ and~$r_2/r_a(Q)$ for executing~$a$ by~$P$ and~$Q$, respectively.
  See~\cite[Definition~3.3.1]{Hil96:phd} (or the appendix) for an explicit definition of the apparent rate~$r_a$ of a $\PEPA$-process.

\begin{figure}
\begin{displaymath}
\scalebox{0.90}{$
\begin{array}{c}
%% row I
\sosrn{NIL}{}
\sosrule
  {\phantom{\mtrans{\alpha}_p}}
  {\nil \, \mtrans{\Edelaya}_p \, \zerof_{\nnreals}} 
\qquad
\sosrn{RAPF1}{}
\sosrule
  {\phantom{\mtrans{\alpha}_p}}
  {\alambda.P \, \mtrans{\Edelaya}_p \, [P \mapsto \lambda]}
\qquad
\sosrn{RAPF2}{}
\sosrule
  {b \neq a}
  {\alambda.P \, \mtrans{\Edelayb}_p \, \zerof_{\nnreals}} 
\bigskip \\
%% row II
\sosrn{CHO}{}
\sosrule
  {P \, \mtrans{\Edelaya}_p \, \amset{P} \quad 
   Q \, \mtrans{\Edelaya}_p \, \amset{Q}}
  {P \cho Q \  \mtrans{\Edelaya}_p \  \amset{P} \cho \amset{Q}}
\qquad
\sosrn{CNS}{}
\sosrule
  {P \, \mtrans{\Edelaya}_p \, \amset{P} \quad X \dfas P}
  {X \, \mtrans{\Edelaya}_p \, \amset{P}}
  \bigskip \\
%% row III
\sosrn{PAR1}{}
\sosrule
  {P \, \mtrans{\Edelaya}_p \, \amset{P} \quad
   Q \, \mtrans{\Edelaya}_p \, \amset{Q} \quad 
   a \, \notin \, A}
  {P \prlA Q \  \mtrans{\Edelaya}_p \ 
   ( \, \amset{P} \prlA \chut_Q \, ) 
   \, + \,
   ( \, \chut_P \prlA \amset{Q} \, )}
\qquad
\sosrn{PAR2}{}
\sosrule
  {P \, \mtrans{\Edelaya}_p \, \amset{P} \quad 
   Q \, \mtrans{\Edelaya}_p \, \amset{Q} \quad 
   a \, \in \, A}
  {P \prlA Q \  \mtrans{\Edelaya}_p \ 
   \arf( \mkern1mu {\amsetP}, {\amsetQ} \mkern1mu )
   \, \cdot \, ( \, \amset{P} \prlA \amset{Q} \, )}
\end{array}
$} %% scalebox
\end{displaymath}
\halflineup
\halflineup
\caption{$\FuTS$ semantics for $\PEPA$.}
\label{fig-pepa-rules}
\end{figure}

\blankline

\noindent
  The $\FuTS$ we consider for the semantics of $\PEPA$ in Figure~\ref{fig-pepa-rules} involves a set of labels~$\Delta$ defined by $\Delta = \ZSET{\Edelaya}{a \in \calA }$.
  The symbol~$\Edelaya$ denotes the execution of the action~$a$, with a duration that is still to be established.
  The underlying semiring for the simple~$\FuTS$ for~$\PEPA$ is the semiring~$\nnreals$ of non-negative reals.
    
\begin{definition}
\label{df-ltfs-pepa}
  The $\FuTS$ $\calSpepa = \twotuple{\prcPEPA}{\mtrans{}_p}$ over $\Delta$ and $\nnreals$ has its transition relation given by the rules of Figure~\ref{fig-pepa-rules}.
  \endef
\end{definition}

\noindent
  We discuss the rules of Figure~\ref{fig-pepa-rules}.
  The $\FuTS$ semantics provides $\nil \mtrans{\Edelaya}_p \zerof_{\nnreals}$, for every action~$a$, with $\zerof_{\nnreals}$ the 0-function $\lambda P . 0$ of~$\nnreals$.
  However, the latter expresses $\thetapepa(\nil)(\Edelaya)(P') = 0$ for every $a \in \calA$ and $P' \in \prcPEPA$, or, in standard terminology, $\nil$ has no transition.
  For the rated action prefix $\alambda$ we distinguish two cases: (i)~execution of the prefix in rule~(RAPF1); (ii)~no~execution of the prefix in rule~(RAPF2).
  In the case of rule~(RAPF1) the label~$\Edelaya$ signifies that the transition involves the execution of the action~$a$.
  The continuation $[ \, P \mapsto \lambda \, ]$ is the function that assigns the rate~$\lambda$ to the process~$P$. 
  All other processes are assigned~$0$, i.e.\ the zero-element of the semiring~$\nnreals$. 
  In the second case, rule~(RAPF2), for labels~$\Edelayb$ with $b \neq a$, we do have a transition, but it is a degenerate one.
  The two rules for the prefix, in particular having the `null-continuation' rule (RAPF2), support the unified treatment of the choice operator in rule (CHO) and the parallel operator in rules (PAR1) and~(PAR2).
  
  Note the semantic sum of functions $\amset{P} \cho \amset{Q}$ replacing the syntactic sum in $P \cho Q$.
  The treatment of constants is as usual.
  Regarding the parallel operator~$\prlA$, with respect to some subset of actions $A \subseteq \calA$, the so-called cooperation set, there are again two rules.
  Now the distinction is between interleaving and synchronization.
  In the case of a label~$\Edelaya$ involving an action~$a$ not in the subset~$A$, either the $P$-operand or the $Q$-operand of $P \prlA Q$ makes progress.
  For example, the effect of the pattern $\amset{P} \prlA \chut_Q$ is that the value $\amset{P}(P') \cdot 1$ is assigned to a process~$P' \prlA Q$, the value $\amset{P}(P') \cdot 0 = 0$ to a process $P' \prlA Q'$ for some $Q' \neq Q$, and the value~$0$ for a process not of the form $P' \prlA Q'$.
  Here, as in all other rules, the right-hand sides of the transitions only involve functions in~$\fsfn{\prcPEPA}{\nnreals}$ and operators on them.
  
  For the synchronization case of the parallel construct, assuming $P \mtrans{\Edelaya} \amsetP$ and $Q \mtrans{\Edelaya} \amsetQ$, the `semantic' scaling factor $\arf(\amsetP,\amsetQ)$ is applied to~$\amsetP \prlA \amsetQ \,$ (with $\prlA$ on $\fsfn{\prcPEPA}{\nnreals}$ induced by~$\prlA$ on~$\prcPEPA$).
  This scaling factor, defined for functions in~$\fsfn{\prcPEPA}{\nnreals}$, is given by
\begin{displaymath}
  \arf( \mkern1mu \amsetP, \, \amsetQ \mkern1mu ) = 
  \begin{array}{c}
    \min \mkern2mu \SET{ \, \fsum \amsetP ,\, \fsum \amsetQ \, }
    \\ \hline
	\fsum \amsetP \cdot \fsum \amsetQ
  \end{array}
\end{displaymath}
provided $\fsum \amsetP, \fsum \amsetQ > 0$, and $\arf( \mkern1mu \amsetP, \, \amsetQ \mkern1mu ) = 0$ otherwise.  
  This results for $\arf( \mkern1mu {\amsetP}, {\amsetQ} \mkern1mu ) \, \cdot \, ( \, \amset{P} \prlA \amset{Q} \, )$, for a process~$R = R_1 \prlA R_2$, in the value $\arf ( \mkern1mu \amsetP, \, \amsetQ \mkern1mu ) \cdot  ( \, \amsetP \prlA \amsetQ  \, )( R_1 \prlA R_2) = \arf ( \mkern1mu \amsetP, \, \amsetQ \mkern1mu ) \cdot \amsetP(R_1) \cdot \amsetQ(R_2)$.
  
%% \blankline

The following lemma establishes the relationship between the `syntactic' and `semantic' apparent rate factors defined on processes and on continuation functions, respectively.

\begin{lemma}
\label{lm-sem-syn-arf}
  Let $P \in \prcPEPA$ and $a \in \calA$. 
  Suppose $P \mtrans{\Edelaya}_p \amsetP$.
  Then $ r_a(P) = \fsum \amsetP$.
  \qed
\end{lemma}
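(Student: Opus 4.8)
The plan is to argue by induction on the measure~$c$ from guarded recursion, so that the induction remains well-founded in the presence of constants. In each case the rules of Figure~\ref{fig-pepa-rules} determine the continuation~$\amsetP$ with $P \mtrans{\Edelaya}_p \amsetP$ uniquely (by totality and determinism), and the task is to compute $\fsum \amsetP$ and compare it with the defining clause of the apparent rate~$r_a$ of~\cite[Definition~3.3.1]{Hil96:phd}. The base cases are immediate: for $P = \nil$, rule~(NIL) gives $\amsetP = \zerof_{\nnreals}$, hence $\fsum \amsetP = 0 = r_a(\nil)$; for $P = \alambda.P'$, rule~(RAPF1) gives $\amsetP = [\, P' \mapsto \lambda \,]$, hence $\fsum \amsetP = \lambda = r_a(\alambda.P')$; and for $P = \blambda.P'$ with $b \neq a$, rule~(RAPF2) gives $\amsetP = \zerof_{\nnreals}$, hence $\fsum \amsetP = 0 = r_a(\blambda.P')$.

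The choice and constant cases follow from the additive clause of Lemma~\ref{lm-props-fsum} together with the induction hypothesis. For $P = P_1 \cho P_2$, writing $\amsetP_1, \amsetP_2$ for the continuations of the immediate subprocesses, rule~(CHO) gives $\amsetP = \amsetP_1 \cho \amsetP_2$, so $\fsum \amsetP = \fsum \amsetP_1 \cho \fsum \amsetP_2 = r_a(P_1) + r_a(P_2) = r_a(P_1 \cho P_2)$. For $P = X$ with $X \dfas P'$, rule~(CNS) forces $\amsetP = \amsetP'$ where $\amsetP'$ is the continuation of~$P'$, and since $c(P') < c(X)$ the induction hypothesis yields $\fsum \amsetP = \fsum \amsetP' = r_a(P') = r_a(X)$.

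The parallel construct $P = P_1 \prlA P_2$ is where the argument is most delicate and is the main obstacle. For $a \notin A$, rule~(PAR1) gives $\amsetP = (\, \amsetP_1 \prlA \chut_{P_2} \,) \cho (\, \chut_{P_1} \prlA \amsetP_2 \,)$; applying both clauses of Lemma~\ref{lm-props-fsum} together with $\fsum \chut_{P_1} = \fsum \chut_{P_2} = 1$ collapses $\fsum \amsetP$ to $\fsum \amsetP_1 \cho \fsum \amsetP_2$, which by the induction hypothesis matches the interleaving clause $r_a(P_1 \prlA P_2) = r_a(P_1) + r_a(P_2)$. For $a \in A$, rule~(PAR2) gives $\amsetP = \arf(\, \amsetP_1, \amsetP_2 \,) \cdot (\, \amsetP_1 \prlA \amsetP_2 \,)$; here I first note that a scalar factor pulls out of~$\fsum$ and then use the multiplicative clause of Lemma~\ref{lm-props-fsum} to get $\fsum \amsetP = \arf(\, \amsetP_1, \amsetP_2 \,) \cdot (\fsum \amsetP_1) \cdot (\fsum \amsetP_2)$. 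Substituting the definition of the semantic apparent rate factor, the product $\fsum \amsetP_1 \cdot \fsum \amsetP_2$ cancels the denominator of~$\arf$, leaving $\min \SET{\, \fsum \amsetP_1, \fsum \amsetP_2 \,} = \min \SET{\, r_a(P_1), r_a(P_2) \,} = r_a(P_1 \prlA P_2)$ by the induction hypothesis and the synchronization clause. The one point needing care is the degenerate case $\fsum \amsetP_1 = 0$ or $\fsum \amsetP_2 = 0$, where the cancellation is not available: here $\arf(\, \amsetP_1, \amsetP_2 \,) = 0$ by convention, so $\fsum \amsetP = 0$, and this agrees with $\min \SET{\, r_a(P_1), r_a(P_2) \,} = 0$ precisely because a vanishing summed continuation corresponds, through the induction hypothesis, to a vanishing apparent rate; thus the zero-denominator convention of~$\arf$ and the boundary behaviour of~$\min$ on the non-negative reals coincide exactly.
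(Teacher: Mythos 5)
Your proof is correct and follows essentially the same route as the paper's: guarded induction, with both clauses of Lemma~\ref{lm-props-fsum} (plus $\fsum \chut_{P_1} = \fsum \chut_{P_2} = 1$) handling the interleaving case, and cancellation of the $\arf$ denominator against $\fsum \amsetP_1 \cdot \fsum \amsetP_2$ in the synchronization case, falling back on the zero convention for $\arf$ when a summed continuation vanishes. If anything, your treatment is slightly more complete than the paper's, which only writes out the two parallel cases and states the degenerate subcase for both sums being zero, whereas you spell out the base, choice and constant cases and correctly cover the degenerate case where only one of $\fsum \amsetP_1, \fsum \amsetP_2$ is zero.
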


\blankline

\noindent
  The proof of the lemma is straightforward.
  It is also easy to prove, by guarded induction, that the $\FuTS$ $\calSpepa$ given by Definition~\ref{df-ltfs-pepa} is total and deterministic.
  So, it is justified to write $\calSpepa = \twotuple{\prcPEPA}{\thetapepa}$. We use~$\simpepa$ to denote the bisimilarity induced by~$\calSpepa$.
  
\blankline

\begin{lemma}
\label{lm-pepa-total-det}
  The $\FuTS$ $\calSpepa$ is total and deterministic.
  \qed
\end{lemma}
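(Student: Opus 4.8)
The plan is to prove totality and determinism together, by showing that for every process~$P \in \prcPEPA$ and every label~$\Edelaya \in \Delta$ there is exactly one continuation~$\amsetP \in \fsfn{\prcPEPA}{\nnreals}$ with $P \mtrans{\Edelaya}_p \amsetP$. Since a constant~$X$ declared by $X \dfas P$ may mention~$X$ again inside its body~$P$, a plain structural induction on~$P$ does not close, and instead I would argue by well-founded induction on the guarded-recursion measure $c : \prcPEPA \to \nats$ recalled in Section~\ref{sec-preliminaries}. This measure strictly decreases when passing from a composite term to each of its immediate operands, and from a constant~$X$ to its body~$P$, because $c(X) > c(P)$ whenever $X \dfas P$. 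What makes the induction well-founded is that the prefix rules (RAPF1) and~(RAPF2) never inspect the body of the prefix: a term $\alambda.P$ is a base case of the recursion, so prefix-guardedness of bodies is exactly what guarantees the existence of~$c$.

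I would then analyse the outermost constructor of the term, checking in each case that precisely one rule of Figure~\ref{fig-pepa-rules} fires and that its right-hand side is determined uniquely. For~$\nil$ only (NIL) applies, giving~$\zerof_{\nnreals}$. For a prefix~$\alambda.P$ and a label~$\Edelayb$ the side conditions of (RAPF1) and~(RAPF2), namely $b = a$ versus $b \neq a$, are mutually exclusive and jointly exhaustive, so exactly one rule fires and yields $[P \mapsto \lambda]$ or~$\zerof_{\nnreals}$. For a choice~$P \cho Q$ only (CHO) applies, and by the induction hypothesis the operand continuations $\amsetP$ and~$\amsetQ$ are unique, whence $\amsetP \cho \amsetQ$ is the unique continuation. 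For a parallel composition~$P \prlA Q$ the conditions $a \notin A$ and $a \in A$ split (PAR1) and~(PAR2) disjointly and exhaustively, and the induction hypothesis again fixes $\amsetP$ and~$\amsetQ$, so the corresponding right-hand side is pinned down. Finally, for a constant~$X$ with $X \dfas P$ only (CNS) applies and the continuation coincides with that of~$P$, unique by the induction hypothesis since $c(P) < c(X)$. In every case at least one rule fires (totality) and the firing rule together with its right-hand side is uniquely determined (determinism).

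It remains to check that each composite right-hand side genuinely denotes an element of~$\fsfn{\prcPEPA}{\nnreals}$, which is routine given Section~\ref{sec-preliminaries}: the pointwise sum~$\cho$, the lifted operator~$\prlA$ (well-defined because the syntactic constructor~$\prlA$ is injective), the characteristic functions~$\chut_P$ and~$\chut_Q$, and the scalar factor $\arf(\amsetP,\amsetQ) \in \nnreals$ all preserve finite support. I expect the only genuinely delicate point to be the constant case, where recursion could threaten the well-foundedness of the argument; this is resolved precisely by the guardedness assumption together with the non-recursive character of the prefix rules, so that is where the real content of the proof lies, the remaining cases being bookkeeping.
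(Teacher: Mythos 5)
Your proof is correct and takes essentially the same route as the paper: the paper gives no written-out argument for this lemma, remarking only that it is ``easy to prove, by guarded induction,'' and your well-founded induction on the guarded-recursion measure~$c$, with a case analysis on the outermost constructor and the observation that the prefix rules (RAPF1)/(RAPF2) act as non-recursive base cases, is precisely that guarded induction spelled out.
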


\blankline

\begin{example}
  To illustrate the ease to deal with multiplicities in the $\FuTS$ semantics, consider the $\PEPA$ processes $P_1 = \alambda.P$ and $P_2 = \alambda.P + \alambda.P$ for some~$P \in \prcPEPA$.
  We have $P_1 \mtrans{\Edelaya}_p [ \, P \mapsto \lambda \, ]$ by rule (RAPF1), but $P_2 \mtrans{\Edelaya}_p [ \, P \mapsto 2 \lambda \, ]$ by rule (RAPF1) and rule~(CHO).
  The latter makes us to compute $[ \, P \mapsto \lambda \, ] + [ \, P \mapsto \lambda \, ]$, which equals $[ \, P \mapsto 2 \lambda \, ]$.
  Thus, in particular we have $P_1 \mathrel{\not{\sim}_{\pepa}} P_2$.  
  Intuitively it is clear that, in general we cannot have $P + P \sim P$ for any reasonable quantitative process equivalence~$\sim$ in the Markovian setting.
  Having twice as many $a$-labelled transitions, the average number for $\alambda.P + \alambda.P$ of executing the action~$a$ per time unit is double the average of executing~$a$ for~$\alambda.P$.
\end{example}

\blankline

\noindent
  The standard operational semantics of~$\PEPA$~\cite{Hil96:phd,Hil05:lics} is given in Figure~\ref{fig-standard-pepa-rules}.
  The transition relation ${\trans{}} \subseteq \prcPEPA \times ( \, \calA \times \poreals \, ) \times \prcPEPA$ is the least relation satisfying the rules.
  For a proper treatment of the rates, the transition relation is considered as a multi-transition system, where also the number of possible derivations of a transition $P \trans{a,\lambda} P'$ matters.
  We stress that such bookkeeping is not needed in the $\FuTS$-approach at all.
  In rule~(PAR2) we use the `syntactic' apparent rate factor for $\PEPA$ processes.
\begin{figure}
\begin{displaymath}
\scalebox{0.90}{$
\begin{array}{c}
\sosrn{RAPF}{}
\sosrule
  {\phantom{\mtrans{\alpha}_p}}
  {\alambda.P \, \trans{a,\lambda} P}
\qquad
\sosrn{CHO1}{}
\sosrule
  {P \, \trans{a,\lambda} \, P'} 
  {P \cho Q \  \trans{a,\lambda} P'} 
\qquad
\sosrn{CHO2}{}
\sosrule
  {Q \, \trans{a,\lambda} \, Q'} 
  {P \cho Q \  \trans{a,\lambda} P'} 
\bigskip \\
\sosrn{PAR1a}{}
\sosrule
  {P \, \trans{a,\lambda} \, P' \quad 
   a \, \notin \, A}
  {P \prlA Q \  \trans{a,\lambda} P' \prlA Q}
\qquad
\sosrn{PAR1b}{}
\sosrule
  {Q \, \trans{a,\lambda} Q' \quad 
   a \, \notin \, A}
  {P \prlA Q \  \trans{a,\lambda} \  P \prlA Q'}
\qquad
\sosrn{CNS}{}
\sosrule
  {P \, \trans{a,\lambda} \, P' \quad X \dfas P}
  {X \, \trans{a,\lambda} \, P'}
\bigskip \\
\sosrn{PAR2}{}
\sosrule
  {P \, \trans{a,\lambda_1} \, P' \quad 
   Q \, \trans{a,\lambda_2} \, Q' \quad 
   a \, \in \, A}
  {P \prlA Q \  \trans{a,\lambda} \ P' \prlA Q'}
   \quad \text{$\lambda = \arf( \mkern1mu {P}, {Q} \mkern1mu ) {\cdot} \lambda_1 {\cdot} \lambda_2$}
\end{array}
$} %% scalebox
\end{displaymath}
\halflineup
\halflineup
\caption{Standard semantics for $\PEPA$.}
\label{fig-standard-pepa-rules}
\end{figure}

\blankline

\noindent
  The so-called total conditional transition rate $q[P,C,a]$ of a $\PEPA$-process~\cite{Hil96:phd,Hil05:lics} for a subset of processes $C \subseteq \prcPEPA$ and~$a \in \calA$ is given by 
\begin{math}
  q[P,C,a] 
  = 
  \tssum_{Q \in C} \; \sum \MSET{ \lambda }{P \trans{a,\lambda} Q}
  %% \label{eq-transfer-rate}
\end{math}.
  Here, $\mlbrace \, P \trans{a,\lambda} Q \, \mrbrace$ is the multiset of  transitions $P \trans{a,\lambda} Q$ and $\MSET{ \lambda }{P \trans{a,\lambda} Q}$ is the multiset of all~$\lambda$'s involved.
  The multiplicity of $P \trans{a,\lambda} Q$ is the number of different ways the transition can be derived using the rules of Figure~\ref{fig-standard-pepa-rules}. 
  We are now ready to define $\PEPA$'s notion of strong equivalence~\cite{Hil96:phd,Hil05:lics}.
  
\blankline

\begin{definition}
\label{df-strong-equiv}
  An equivalence relation $R \subseteq \prcPEPA \times \prcPEPA$ is called a strong equivalence if $q[ P_1 , \Rclass{Q} , a ] = q[ P_2 , \Rclass{Q} , a ]$ for all $P_1, P_2 \in \prcPEPA$ such that $R(P_1,P_2)$, all~$Q \in \prcPEPA$ and all~$a \in \calA$.
  Two processes $P_1, P_2 \in \prcPEPA$ are strongly equivalent if $R(P_1,P_2)$ for a strong equivalence~$R$, notation $P_1 \eqpepa P_2$.
  \endef
\end{definition}

\blankline

\noindent
  The next lemma couples, for a $\PEPA$-process~$P$, an action~$a$ and a function $\amsetP \in \fsfn{\prcPEPA}{\nnreals}$, the evaluation~$\amsetP(P')$ with respect to the $\FuTS$-semantics to the cumulative rate for~$P$ of reaching~$P'$ by a transition involving the label~$a$ in the standard operational semantics.

\blankline

\begin{lemma}
\label{lm-cnt-ltfs-match}
  Let $P \in \prcPEPA$ and $a \in \calA$. 
  Suppose $P \mtrans{\Edelaya} \amsetP$.
  Then it holds that $\amsetP (P') = \tssum \; \MSET{ \lambda }{ P \trans{a,\lambda} P' }$ for all~$P' \in \prcPEPA$.
  \qed
\end{lemma}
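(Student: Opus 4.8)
The plan is to prove the identity by induction on the guarded-recursion measure $c(P)$ from Section~\ref{sec-preliminaries}, following the syntactic structure of~$P$ and, for a constant $X$ with $X \dfas P_0$, unfolding the declaration (which strictly decreases~$c$). For each syntactic form I would compare the continuation $\amsetP$ produced by the unique $\FuTS$-transition $P \mtrans{\Edelaya}_p \amsetP$ (Lemma~\ref{lm-pepa-total-det}) with the cumulative rate $\tssum \MSET{\lambda}{P \trans{a,\lambda} P'}$ read off from the multi-transition system of Figure~\ref{fig-standard-pepa-rules}. Two observations drive the whole argument: in the semiring~$\nnreals$ the pointwise sum $\cho$ is ordinary addition, and the multiset of derivations of a transition of a composite process is the disjoint union of the derivations contributed by each applicable rule, so that multiplicities are tracked additively.

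The base cases and the easy inductive cases follow directly from the definitions. For $P = \nil$ both sides are~$0$. For a prefix $P = \blambda.P_0$ I would split on whether $b = a$: when $b = a$, rule~(RAPF1) gives $\amsetP = [\,P_0 \mapsto \lambda\,]$ while the single derivation of~(RAPF) yields the singleton multiset $\MSET{\lambda}{\blambda.P_0 \trans{a,\lambda} P_0}$, so both sides equal~$\lambda$ at~$P_0$ and~$0$ elsewhere; when $b \neq a$, rule~(RAPF2) gives $\amsetP = \zerof_{\nnreals}$ and no $a$-transition exists. For $P = P_1 \cho P_2$, rule~(CHO) gives $\amsetP = \amsetP_1 \cho \amsetP_2$, and since each derivation of $P_1 \cho P_2 \trans{a,\lambda} P'$ arises through exactly one of (CHO1),~(CHO2) from a derivation of a transition of $P_1$ or of $P_2$, the cumulative rate splits additively; both sides then match by the induction hypothesis, using that $\cho$ is~$+$. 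The constant case is handled the same way, both semantics passing unchanged through the body via~(CNS).

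The parallel operator $P = P_1 \prlA P_2$ needs more care, and here I would first note that both sides vanish unless $P'$ has the form $R_1 \prlA R_2$: every (PAR$\cdot$)-derivation produces such a target, while on the $\FuTS$-side the functional operator~$\prlA$, being induced by the injective syntactic operator, assigns~$0$ outside this shape. For $a \notin A$ and $P' = R_1 \prlA R_2$, I would unfold $(\amsetP_1 \prlA \chut_{P_2}) \cho (\chut_{P_1} \prlA \amsetP_2)$ using the preliminaries: the first summand contributes $\amsetP_1(R_1)$ precisely when $R_2 = P_2$, the second contributes $\amsetP_2(R_2)$ precisely when $R_1 = P_1$. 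This matches the contributions of~(PAR1a) and~(PAR1b), which respectively leave the right, resp.\ left, operand fixed, so the claim follows from the induction hypothesis on $P_1$ and~$P_2$.

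The synchronization case $a \in A$ is the main obstacle. For $P' = R_1 \prlA R_2$, rule~(PAR2) of the $\FuTS$ gives $\amsetP(R_1 \prlA R_2) = \arf(\amsetP_1,\amsetP_2)\cdot \amsetP_1(R_1)\cdot\amsetP_2(R_2)$, whereas the standard rule~(PAR2) pairs each derivation of $P_1 \trans{a,\lambda_1} R_1$ with each derivation of $P_2 \trans{a,\lambda_2} R_2$ into one of rate $\arf(P_1,P_2)\cdot\lambda_1\cdot\lambda_2$; this pairing is a bijection between derivations of the product transition and pairs of component derivations, so no multiplicity is lost or duplicated. To reconcile the two expressions I would invoke Lemma~\ref{lm-sem-syn-arf} to equate the semantic factor $\arf(\amsetP_1,\amsetP_2)$ with the syntactic factor $\arf(P_1,P_2)$, via $r_a(P_i) = \fsum\amsetP_i$. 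It then remains to establish
\begin{displaymath}
  \tssum \MSET{\lambda}{P_1 \prlA P_2 \trans{a,\lambda} R_1 \prlA R_2}
  =
  \arf(P_1,P_2)\cdot \Bigl(\tssum \MSET{\lambda_1}{P_1 \trans{a,\lambda_1} R_1}\Bigr)\cdot\Bigl(\tssum \MSET{\lambda_2}{P_2 \trans{a,\lambda_2} R_2}\Bigr),
\end{displaymath}
which is exactly the distributive rearrangement of $\sum_{\lambda_1,\lambda_2} \arf(P_1,P_2)\cdot\lambda_1\cdot\lambda_2$ into a product of sums that underlies Lemma~\ref{lm-props-fsum}. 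Applying the induction hypothesis to $P_1$ and $P_2$ identifies the two bracketed sums with $\amsetP_1(R_1)$ and $\amsetP_2(R_2)$, which completes this case and the induction.
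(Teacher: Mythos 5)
Your proposal is correct and follows essentially the same route as the paper's proof: guarded induction on~$P$, with the parallel cases resolved exactly as the paper does, namely by exploiting injectivity of~$\prlA$, matching the interleaving continuation $(\amsetP_1 \prlA \chut_{P_2}) \cho (\chut_{P_1} \prlA \amsetP_2)$ against the (PAR1a)/(PAR1b) derivations, and in the synchronization case equating $\arf(\amsetP_1,\amsetP_2)$ with $\arf(P_1,P_2)$ via Lemma~\ref{lm-sem-syn-arf} before applying distributivity. The only cosmetic difference is that you treat the interleaving case uniformly (including the diagonal target $P_1 \prlA P_2$) where the paper splits it into three subcases, but the underlying argument is identical.
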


\blankline

\noindent
With the lemma in place we can prove the following correspondence result for $\calSpepa$-bisimilarity with respect to the $\FuTS$ for~$\PEPA$ of Definition~\ref{df-ltfs-pepa} and strong equivalence as given by Definition~\ref{df-strong-equiv}.

\blankline

\begin{theorem}
\label{th-pepa-strong-equiv-lfts-bisim}
  For any two $\PEPA$-processes $P_1, P_2 \in \prcPEPA$ it holds that $P_1 \simpepa P_2$ iff $P_1 \eqpepa P_2$.
\end{theorem}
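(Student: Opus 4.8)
The plan is to reduce the equivalence of $\simpepa$ and $\eqpepa$ to a pointwise identity between the two defining conditions, so that the same equivalence relations witness both notions. Since $\simpepa$ relates $P_1$ and $P_2$ exactly when some $\calSpepa$-bisimulation does (Definition~\ref{df-ltfs-bisim}) and $\eqpepa$ relates them exactly when some strong equivalence does (Definition~\ref{df-strong-equiv}), it suffices to show that an equivalence relation $R \subseteq \prcPEPA \times \prcPEPA$ is an $\calSpepa$-bisimulation if and only if it is a strong equivalence. The theorem then follows immediately.

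First I would fix an equivalence relation $R$, processes $P, Q \in \prcPEPA$ and an action $a \in \calA$, and unfold the $\calSpepa$-bisimulation clause for the single transition relation $\mtrans{}_p$, the label $\Edelaya \in \Delta$ and the target class $\Rclass{Q}$. Writing $P \mtrans{\Edelaya}_p \amsetP$, this clause concerns the cumulative value $\tssum_{Q' \in \Rclass{Q}} \thetapepa(P)(\Edelaya)(Q')$. Using Lemma~\ref{lm-cnt-ltfs-match}, each summand $\thetapepa(P)(\Edelaya)(Q') = \amsetP(Q')$ equals $\tssum \MSET{\lambda}{P \trans{a,\lambda} Q'}$, so that, unfolding the definition of the total conditional transition rate,
\begin{displaymath}
  \tssum_{Q' \in \Rclass{Q}} \thetapepa(P)(\Edelaya)(Q')
  = \tssum_{Q' \in \Rclass{Q}} \tssum \MSET{\lambda}{P \trans{a,\lambda} Q'}
  = q[P, \Rclass{Q}, a],
\end{displaymath}
which is precisely the quantity appearing in the strong equivalence condition.

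With this identity the two conditions become literally the same: $R(P_1, P_2)$ forces $\tssum_{Q' \in \Rclass{Q}} \thetapepa(P_1)(\Edelaya)(Q') = \tssum_{Q' \in \Rclass{Q}} \thetapepa(P_2)(\Edelaya)(Q')$ for all $Q$ and all $\Edelaya$ if and only if it forces $q[P_1, \Rclass{Q}, a] = q[P_2, \Rclass{Q}, a]$ for all $Q$ and all $a$. Here I would note that the label set $\Delta$ is in bijection with $\calA$ via $a \mapsto \Edelaya$, and that quantifying over all target states $t = Q$ in Definition~\ref{df-ltfs-bisim} amounts to quantifying over all classes $\Rclass{Q}$, since the summed value depends only on the class of~$Q$; thus the ranges of the two quantifications coincide. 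Hence $R$ is an $\calSpepa$-bisimulation exactly when it is a strong equivalence, and the theorem follows.

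I expect no serious obstacle, because the genuine content is already carried by Lemma~\ref{lm-cnt-ltfs-match}, which discharges the delicate bookkeeping of transition multiplicities by matching the cumulative $\FuTS$ value $\amsetP(P')$ with the multiset sum $\tssum \MSET{\lambda}{P \trans{a,\lambda} P'}$. The only points demanding care are the matching of the quantifier ranges described above and the observation that all sums involved exist, which is guaranteed by the finite support of $\amsetP$ together with the finiteness of each multiset of derivations.
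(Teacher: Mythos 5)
Your proposal is correct and follows essentially the same route as the paper's own proof: both reduce the theorem to showing that an equivalence relation is an $\calSpepa$-bisimulation iff it is a strong equivalence, with the single substantive step being the identity $\tssum_{Q' \in \Rclass{Q}} \thetapepa(P)(\Edelaya)(Q') = q[P,\Rclass{Q},a]$ obtained from Lemma~\ref{lm-cnt-ltfs-match} and the definitions of $q$ and $\thetapepa$. The extra remarks you add on the bijection $a \mapsto \Edelaya$ and on the quantifier ranges are details the paper leaves implicit, but they do not change the argument.
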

\begin{proof}
  Let~$R$ be an equivalence relation on~$\prcPEPA$.
  Choose $P,Q \in \prcPEPA$ and~$a \in \calA$.
  Suppose $P \mtrans{\Edelaya}_p \amsetP$.
  Thus $\theta_\pepa (P)(\Edelaya) = \amsetP$.
  We have 
\begin{displaymath}
\def\arraystretch{1.2}
\begin{array}[t]{rcll}
  q[ P, \Rclass{Q}, a ]
  & = &
  \tssum_{Q' \in \Rclass{Q}} \; \tssum\MSET{ \lambda }{P \trans{a,\lambda} Q'}
  & \text{(by definition $q[ P, \Rclass{Q}, a ]$}
  \\ & = &
  \tssum_{Q' \in \Rclass{Q}} \; \amsetP( Q' ) 
  & \text{(by Lemma~\ref{lm-cnt-ltfs-match})}
  \\ & = &
  \tssum_{Q' \in \Rclass{Q}} \; \thetapepa(P)(a)(Q')
  & \text{(by definition $\thetapepa$)}
\end{array}
\def\arraystretch{1.0}
\end{displaymath} 
%% \smallskip \smallskip \\
  Therefore, for $\PEPA$-processes $P_1$ and~$P_2$ it holds that 
$q[ P_1, \Rclass{Q}, a ] = q[ P_2, \Rclass{Q}, a ]$ for all $Q \in \prcPEPA$, $a \in \calA$ iff $\tssum_{Q' \in \Rclass{Q}} \; \thetapepa(P_1)(a)(Q') = \tssum_{Q' \in \Rclass{Q}} \; \thetapepa(P_2)(a)(Q')$ for all $Q \in \prcPEPA$, $a \in \calA$.
  Thus, the equivalence relation~$R$ is a strong equivalence iff $R$~is an $\calSpepa$-bisimulation, from which the theorem follows.
\end{proof}

\noindent
In view of our general correspondence result Theorem~\ref{th-correspondence}, the above theorem shows that $\PEPA$'s strong equivalence~$\eqpepa$ is a behavioral equivalence, viz.\ behavioral equivalence~$\approx_\pepa$ with respect to the functor of $\calSpepa$, and that its standard, $\FuTS$ and coalgebraic semantics coincide.

%% file iml.tex
%!TEX root = accat.tex

\section{$\FuTS$ Semantics of $\IML$}
\label{sec-iml}

In this section we provide a $\FuTS$ semantics for a relevant part of the language of $\IMC$~\cite{Her02:springer}.
  $\IMC$,~Interactive Markov Chains, are automata that combine two types of transitions: interactive transitions that involve the execution of actions and Markovian transitions that represent the progress of time governed by exponential distribution.
  As a consequence, $\IMC$ embody both non-deterministic and stochastic behaviour.
  System analysis using $\IMC$ proves to be a powerful approach because of the orthogonality of qualitative and quantitative dynamics, their logical underpinning and tool support.
%  There is a price to pay, though.
%  The underlying mathematical structures for~$\IMC$ are $\CTMDP$, continuous-time Markov decision processes rather than $\CTMC$, as for~$\PEPA$, which are easier to handle.
  A number of equivalences, both strong and weak, are available for $\IMC$~\cite{EHZ10:concur}.
  In our treatment here, dealing with a fragment we call~$\IML$, we do not deal with internal $\tau$-steps and focus on strong bisimulation.

\blankline

\begin{definition}
  The set $\prcIML$ of $\IML$ processes is given by the BNF
\begin{math}
  P \bnfeq \nil \mid a.P \mid \lambda.P \mid P + P \mid P \prlA P \mid X
\end{math}
  where $a$~ranges over the set of actions~$\calA$, $\lambda$~over~$\poreals$, $A$~over the set of finite subsets of~$\calA$ and $X$~over the set of constants~$\calX$.
  \endef
\end{definition}

\blankline

\noindent
  In~$\IML$ there are separate prefix constructions for actions~$a.P$ and for time-delays~$\lambda.P$.
  No restriction is imposed on the alternative and parallel composition of processes.
  For example, we have the process $P = a.\lambda.\nil \cho \mu.b.\nil$ in~$\IML$.
  It should be noted that for~$\IMC$ actions are considered to take no time. 
%  So, in fact, for the process~$P$ it is the enviroment, by allowing of blocking the action~$a$, that resolves the non-deterministic choice.

\blankline

\begin{definition}
\label{df-iml-sem}
  The formal semantics of $\prcIML$ is given by the $\FuTS$
  $\calSiml = \threetuple{\, \prcIML}{\mtrans{}_1}{\mtrans{}_2}$ over the label sets $\calA$ and~$\Delta = \SET{ \mkern2mu \delta \mkern2mu }$ and the semirings $\bools$ and~$\nnreals$ with transition relations ${\mtrans{}_1} \subseteq \prcIML \times \calA \times \fsfn{\prcIML}{\bools}$ and ${\mtrans{}_2}\subseteq \prcIML \times \Delta \times \fsfn{\prcIML}{\nnreals}$ defined as the least relations satisfying the rules of Figure~\ref{fig-iml-rules}. 
\endef
\end{definition}

\begin{figure}
\begin{displaymath}
\scalebox{0.85}{$
\begin{array}{c}

\sosrn{NIL1}{}
\sosrule
  {a \in \calA}
  {\nil \, \mtrans{a}_1 \, \zerof_\bools} 
  
\qquad

\sosrn{NIL2}{}
\sosrule
  {}
  {\nil \, \mtrans{\Edelay}_2 \, \zerof_{\nnreals}}
  
\qquad

\sosrn{APF3}{}
\sosrule
  {}
  {a.P \, \mtrans{\Edelay}_2 \, \zerof_{\nnreals}} 

\bigskip \\

\sosrn{APF1}{}
\sosrule
  {}
  {a.P \, \mtrans{a}_1 \, [P \mapsto \TRUE]}
  
\quad

\sosrn{APF2}{}
\sosrule
  {b \neq a}
  {a.P \, \mtrans{b}_1 \, \zerof_\bools}
  
\quad

\sosrn{RPF1}{}
\sosrule
  {a \in \calA}
  {\lambda.P \, \mtrans{a}_1 \, \zerof_\bools}
  
\quad

\sosrn{RPF2}{}
\sosrule
  {}
  {\lambda.P \, \mtrans{\Edelay}_2 \, [P \mapsto \lambda]}

\bigskip \\

\sosrn{PAR1}{}
\sosrule
  {P \, \mtrans{\alpha}_i \, \amset{P} \quad
   Q \, \mtrans{\alpha}_i \, \amset{Q} \quad 
   \alpha \notin A}
  {P \prlA Q \  \mtrans{\alpha}_i \ 
   ( \, \amset{P} \prlA \chut^{\mkern2mu i}_{\mkern-2mu Q} \, ) 
   \, + \,
   ( \, \chut^{\mkern2mu i}_{\mkern-2mu P} \prlA \amset{Q} \, )}
  (i=1,2)
  
\qquad

\sosrn{PAR2}{}
\sosrule
  {P \, \mtrans{a}_1 \, \amset{P} \quad 
   Q \, \mtrans{a}_1 \, \amset{Q} \quad 
   a \in A}
  {P \prlA Q \  \mtrans{a}_1 \ 
   \amset{P} \prlA \amset{Q}}
   
\bigskip \\

\sosrn{CHO}{}
\sosrule
  {P \, \mtrans{\alpha}_i \, \amset{P} \quad 
   Q \, \mtrans{\alpha}_i \, \amset{Q}}
  {P \cho Q \  \mtrans{\alpha}_i \  \amset{P} \cho \amset{Q}} 
  (i=1,2)

\qquad

\sosrn{CON}{}
\sosrule
  {P \, \mtrans{\alpha}_i \, \amset{P} \quad 
   X \dfas P}
  {X \, \mtrans{\alpha}_i \, \amset{P}}
  (i=1,2)

\end{array}
$} %% end scalebox
\end{displaymath}
\halflineup
\halflineup
\caption{$\FuTS$ semantics for $\IML$.}
\label{fig-iml-rules}
\end{figure}

\blankline

\noindent
  To accommodate for action-based and delay-related transitions, the $\FuTS$~$\calSiml$ is non-simple, having the two transition-to-function relations $\mtrans{}_1$ and~$\mtrans{}_2$.
  Actions~$a \in \calA$ decorate~$\mtrans{}_1$, the special symbol~$\delta$ decorates~$\mtrans{}_2$.
  Note rule~(APF3) and rule~(RPF1) that involve the null-functions of~$\nnreals$ and of~$\bools$, respectively, to express that a process~$a.P$ does not trigger a delay and a process~$\lambda.P$ does not execute an action.
  For the parallel construct~$\prlA$, interleaving applies both for non-synchronized actions~$a \notin A$ as well as for delays (but not mixed). 
  Therefore, rule~(PAR1) pertains to both $\mtrans{}_1$ and~$\mtrans{}_2$, with~$\alpha$ ranging over~$\calA \cup \Delta$.
  The same holds for non-deterministic choice, rule~(CHO), and constants, rule~(CON).
  Finally, $\IML$ does not provide synchronization of delays in the parallel construct.
  Rule~(PAR2) only concerns the transition relation~$\mtrans{}_2$.
  In rule~(PAR1), for clarity, we decorated the characteristic functions, writing $\chut^\myi_{\mkern-1mu P} \mkern1mu$, for $i=1,2$, for $\chut_{\mkern-1mu P} = [ \, P \mapsto \TRUE \, ]$ in~$\fsfn{ \prcIML }{ \bools }$ and $\chut_{\mkern-1mu P} = [ \, P \mapsto 1 \, ]$ in~$\fsfn{ \prcIML }{ \nnreals }$. 
  
\blankline

\begin{example}
  Assume $X \dfas a.\lambda.b.X$ and $Y \dfas a.\mu.b.Y$.
  Put $A = \SET{a,b}$.
  Then we have
\begin{displaymath}
\scalebox{.99}{$
\begin{array}{r@{\, \prlA \,}l@{\;}c@{\;}l@{\,}r@{\, \prlA \,}l@{\,}lcr@{\, \prlA \,}l@{\;}c@{\;}l@{\,}r@{\, \prlA \,}l@{\;}l}
  X & Y & \mtrans{a}_1 & 
  [ & \lambda.b.X & \mu.b.Y & \mapsto \TRUE \, ] 
  & \  &
  \lambda.b.X & \mu.b.Y & \mtrans{\Edelay}_2 &
  [ &  b.X & \mu.b.Y & \mapsto \lambda ,\ 
  \lambda.b.X \prlA b.Y \mapsto \mu \, ]
  \\
  b.X & b.Y & \mtrans{b}_1 &
  [ & X & Y & \mapsto \TRUE \, ]
  && b.X & \mu.b.Y & \mtrans{\Edelay}_2 &
  [ & b.X & b.Y & \mapsto \mu \, ]
  \\ \multicolumn{5}{c}{} &&&&
  \lambda.b.X & b.Y & \mtrans{\Edelay}_2 &
  [ & b.X & b.Y & \mapsto \lambda \, ]
\end{array}$}
\end{displaymath}
  %% See Figure~\ref{fig-example-iml}a. 
\end{example}

\blankline

\noindent
  It is not difficult to verify that $\calSiml$ is a total and deterministic $\FuTS$.
  Below we use $\calSiml = \threetuple {\prcIML}{\theta_1}{\theta_2}$ and write $\simiml$ for the associated bisimilarity.
  
\blankline

\begin{lemma}
\label{lm-iml-total-det}
  The $\FuTS$ $\calSiml$ is total and deterministic.
  \qed
\end{lemma}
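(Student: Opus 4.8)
The statement to prove is Lemma~\ref{lm-iml-total-det}: the $\FuTS$~$\calSiml$ is total and deterministic. By Definition of total and deterministic, I need to show that for each of the two transition relations $\mtrans{}_i$ ($i = 1,2$), and for every process $P \in \prcIML$ and every label (an action $a \in \calA$ for $\mtrans{}_1$, the symbol $\delta$ for $\mtrans{}_2$), there exists \emph{exactly one} $v$ in the appropriate $\fsfn{\prcIML}{\calRi}$ with $P \mtrans{\ell}_i v$. The natural method is structural induction on~$P$ following the BNF $P \bnfeq \nil \mid a.P \mid \lambda.P \mid P + P \mid P \prlA P \mid X$, checking existence and uniqueness simultaneously at each clause against the rules of Figure~\ref{fig-iml-rules}. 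Because the grammar includes recursively-defined constants~$X$, the plain structural induction does not terminate on the $X \dfas P$ clause; I would therefore use guarded recursion, invoking the complexity measure $c : \prcIML \to \nats$ from the preliminaries (with $c(X) > c(P)$ when $X \dfas P$ and $c$ strictly decreasing under the syntactic operators). The induction is then on $c(P)$.

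\textbf{Existence.} First I would argue totality: for each syntactic form there is always at least one applicable rule producing a continuation. For the base cases $\nil$ and the prefixes, rules (NIL1)/(NIL2), (APF1)/(APF2)/(APF3), (RPF1)/(RPF2) cover every label for both relations, each yielding a concrete function (a characteristic function, a singleton, or the zero-function). For the composite forms $P \cho Q$ and $P \prlA Q$ and the constant~$X$, the premises of (CHO), (PAR1), (PAR2), (CON) require transitions of the strictly simpler (in $c$-measure) subterms; by the induction hypothesis those exist, so the conclusion fires and a continuation for the composite term exists as well. I would note that in the parallel case I must split on whether the label lies in the cooperation set~$A$ (selecting (PAR1) versus (PAR2)) and, for (PAR1), on whether the label is an action or the delay~$\delta$, but in each case exactly one of these rules is enabled.

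\textbf{Uniqueness, and the main obstacle.} The harder half is determinism: I must show no two \emph{distinct} continuations can be derived for the same $P$ and label. The key observation is that the rules are \emph{syntax-directed}: the shape of~$P$ and the label together determine which rule can apply, and the premises of the applicable rule determine the subterms' continuations uniquely by the induction hypothesis. The continuation in the conclusion is then a fixed function of those uniquely-determined subcontinuations (via $\cho$, $\prlA$, or $\chut$), so it is itself unique. The one genuine subtlety I anticipate is the parallel case: for a label $a \in A$ only (PAR2) applies, for $\alpha \notin A$ only (PAR1) applies, and the delay~$\delta$ is never in the action set~$A \subseteq \calA$, so delays always route through (PAR1); I must check these cases partition cleanly so that exactly one rule is ever enabled. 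A second point requiring care is that the operators $\cho$, $\prlA$, and the characteristic functions $\chut^{\,i}_{\mkern-1mu P}$ on $\fsfn{\prcIML}{\calRi}$ are well-defined single-valued functions of their arguments (the former relying on $\prlA$ being injective on $\prcIML$, cf.\ the preliminaries), so that distinct derivations cannot yield distinct continuations even when the same rule fires. Because the proof is a routine, if somewhat tedious, case analysis of this kind, I would present it compactly, remarking—as the paper itself does for $\calSpepa$ in Lemma~\ref{lm-pepa-total-det}—that the argument is by guarded induction and omitting the fully spelled-out cases.
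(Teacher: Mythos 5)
Your proposal is correct and follows exactly the route the paper has in mind: the paper gives no explicit proof of Lemma~\ref{lm-iml-total-det} (it is stated with the remark that it ``is not difficult to verify''), and for the analogous Lemma~\ref{lm-pepa-total-det} it names guarded induction as the method, which is precisely your argument. Your case analysis --- syntax-directedness of the rules, the clean partition of the parallel cases by $\alpha \in A$ versus $\alpha \notin A$ (with $\delta \notin A$ since $A \subseteq \calA$), and well-definedness of the operations on $\fsfn{\prcIML}{\calRi}$ via injectivity of $\prlA$ --- supplies the details the paper leaves to the reader.
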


\blankline

\noindent
  The standard SOS semantics of~$\IML$~\cite{Her02:springer} is given in Figure~\ref{fig-standard-iml} involving the transition relations 
\begin{displaymath}
  {\trans{}} \subseteq \prcIML \times \calA \times \prcIML 
  \qquad \text{and} \qquad 
  {\Mtrans{}} \subseteq \prcIML \times \poreals \times \prcIML
\end{displaymath}
  Below we will use the functions $\bfT$ and~$\bfR$ based on $\trans{}$ and~$\Mtrans{}$, cf.~\cite{HK10:fmco}.
  We have $\bfT \colon \prcIML \times \calA \times {\textbf{2}}^{\prcIML} \to \bools$ given by $\bfT( P, a, C ) = \TRUE$ if the set $\ZSET{ P' \in C }{ P \trans{a} P' }$ is non-empty, for all $P \in \prcIML$, $a \in \calA$ and any subset~$C \subseteq \prcIML$.
  For $\bfR \colon \prcIML \times \prcIML \to \nnreals$ we put $\bfR(P,P') = \tssum \MSET{ \lambda }{ P \Mtrans{\lambda} P' }$.
  Here, as common for probabilistic and stochastic process algebras, the comprehension is over the multiset of transitions leading from~$P$ to~$P'$ with label~$\lambda$.
%  Alternatively, one could define an explicit $\cnt$-function, $\cnt : \prcIML \times \poreals \times \prcIML \to \nnreals$ returning the number of multiplicities of a transition $P \Mtrans{\lambda} P'$, as we did in Section~\ref{sec-pepa}.
  We extend $\bfR$ to $\prcIML \times {\textbf{2}}^{\prcIML}$ by $\bfR(P,C) = \tssum_{P' \myin C} \; \tssum \MSET{ \lambda }{ P \Mtrans{\lambda} P' }$.
  
  For $\IML$ we have the following notion of strong bisimulation~\cite{Her02:springer,HK10:fmco} that we will compare with the notion of bisimulation associated with the $\FuTS$~$\calSiml$.
 
\begin{figure}
\begin{displaymath}
\scalebox{0.85}{$
\begin{array}{c}

\sosrn{APF}{}
\sosrule
  {}
  {\pfx{a}{P} \, \trans{a} \, P}

\qquad

\sosrn{CHO1}{}
\sosrule
  {P \, \trans{a} \, R}
  {P \cho Q  \, \trans{a} \, R}

\qquad

\sosrn{CHO2}{}
\sosrule
  {Q \, \trans{a} \, R}
  {P \cho Q  \, \trans{a} \, R}

\qquad
 
\sosrn{CON1}{}
\sosrule
  {P \, \trans{a} \, Q \quad X \dfas P}
  {X \, \trans{a} \, Q}
  
\bigskip \\

\sosrn{PAR1a}{}
\sosrule
  {P \, \trans{a} \, P' \quad a \notin A}
  {P \prlA Q \, \trans{a}\ , P' \prlA Q}

\qquad

\sosrn{PAR1b}{}
\sosrule
  {Q \, \trans{a} \, Q' \quad a \notin A}
  {P \prlA Q \, \trans{a} \, P \prlA Q'}

\qquad

\sosrn{PAR2}{}
\sosrule
  {P \, \trans{a} \, P' \quad 
   Q \, \trans{a} \, Q' \quad a \in A}
  {P \prlA Q \, \trans{a} \, P' \prlA Q'}

\displaybreak[3] 
\bigskip \\

\sosrn{RPF}{}
\sosrule{}
  {\pfx{\lambda}{P} \, \Mtrans{\lambda} \, P}

\qquad

\sosrn{CHO3}{}
\sosrule
  {P \, \Mtrans{\lambda} \, R}
  {P \cho Q  \, \Mtrans{\lambda} \, R}

\qquad

\sosrn{CHO4}{}
\sosrule
  {Q \, \Mtrans{\lambda} \, R}
  {P \cho Q \, \Mtrans{\lambda} \, R}

\qquad

\sosrn{CON2}{}
\sosrule
  {P \,  \Mtrans{\lambda} \, Q \quad X \dfas P}
  {X \, \Mtrans{\lambda} \, Q}
  
\bigskip \\

\sosrn{PAR1c}{}
\sosrule
  {P \, \Mtrans{\lambda} \, P'}
  {P \prlA Q \, \Mtrans{\lambda} \, P' \prlA Q}

\qquad

\sosrn{PAR1d}{}
\sosrule
  {Q \, \Mtrans{\lambda} \, Q'}
  {P \prlA Q \, \Mtrans{\lambda} \, P \prlA Q'}
\end{array}
$} %% scalebox
\end{displaymath}
\halflineup
\halflineup
\caption{Standard SOS rules for $\IML$.}
\label{fig-standard-iml}
\end{figure}

\blankline

\begin{definition}
\label{df-iml-strong-bisimulation}
  An equivalence relation $R \subseteq \prcIML \times \prcIML$ is called a strong bisimulation for~$\IML$ if, for all $P_1, P_2 \in \prcIML$ it holds that
\begin{itemize}
  \item for all $a \in \calA$ and $Q \in \prcIML$\emph{:} $\bfT \threetuple {P_1} a {\Rclass{Q}} \iff \bfT \threetuple {P_2} a {\Rclass{Q}}$
  \smallskip
  \item for all $Q \in \prcIML$\emph{:} $\bfR \twotuple {P_1}{\Rclass{Q}} \; = \; \bfR \twotuple {P_2}{\Rclass{Q}}$.
\end{itemize}
  for all $P_1, P_2 \in \prcIML$ such that $R(P_1,P_2)$.
  Two processes $P_1, P_2 \in \prcIML$ are called strongly bisimilar if $R(P_1,P_2)$ for a strong bisimulation~$R$ for~$\IML$, notation $P_1 \eqiml P_2$.
  \endef
\end{definition}

\blankline

\noindent
  To establish the correspondence of $\FuTS$ bisimilarity~$\simiml$ for~$\calSiml$ of Definition~\ref{df-iml-sem} and strong bisimilarity~$\eqiml$ for~$\IML$, we need to connect the state-to-function relation~$\mtrans{}_1$ and the transition relation~$\trans{}$ as well as the state-to-function relation~$\mtrans{}_2$ and the transition relation~$\Mtrans{} \,$.

\blankline

\begin{lemma}
\label{lm-mtrans-trans}
\begin{itemize}
\item [] \mbox{}
\item [(a)]
  Let $P \in \prcIML$ and $a \in \calA$. 
  If $P \mtrans{a}_1 \amsetP$ then $P \trans{a} P' \iff \amsetP( P' ) = \TRUE$.
\item [(b)]
  Let $P \in \prcIML$ . 
  If $P \mtrans{\Edelay}_2 \amsetP$ then $\tssum \MSET{ \lambda }{ P \Mtrans{\lambda} P' } = \amsetP( P' )$.
  \qed
\end{itemize}
\end{lemma}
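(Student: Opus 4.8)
The plan is to prove parts~(a) and~(b) simultaneously, by induction on the guardedness measure $c(P)$ of Section~\ref{sec-preliminaries}; concretely this is a structural induction on~$P$ in which the $\nil$ and prefix forms require no hypothesis on subterms (so act as base cases) and the constant case is justified by $c(X) > c(P_0)$ for a declaration $X \dfas P_0$. Throughout I would invoke Lemma~\ref{lm-iml-total-det}: since $\calSiml$ is total and deterministic, the continuation~$\amsetP$ with $P \mtrans{a}_1 \amsetP$ (for part~(a)), respectively $P \mtrans{\Edelay}_2 \amsetP$ (for part~(b)), is the unique function delivered by the applicable rule of Figure~\ref{fig-iml-rules}. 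It therefore suffices, for each syntactic form of~$P$, to compute the value $\amsetP(P')$ at an arbitrary target~$P'$ from that rule and compare it against the set, respectively multiset, of standard transitions of Figure~\ref{fig-standard-iml}. In the choice and parallel cases I would write $v_1, v_2$ for the continuations of the two operands obtained from the premises, so that the induction hypothesis applies to them.

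The $\nil$ and prefix cases are read off directly: rules~(NIL1)/(NIL2) give the zero functions, matching the absence of any transition; rule~(APF1) gives $[P_0 \mapsto \TRUE]$, matching the unique standard step $a.P_0 \trans{a} P_0$ of rule~(APF), while (APF2) and~(RPF1) give $\zerof_\bools$ and (APF3) gives $\zerof_{\nnreals}$; dually rule~(RPF2) gives $[P_0 \mapsto \lambda]$, matching $\lambda.P_0 \Mtrans{\lambda} P_0$ of rule~(RPF). For a choice $P = P_1 \cho P_2$ rule~(CHO) gives $\amsetP = v_1 \cho v_2$, hence $\amsetP(P') = v_1(P') \cho v_2(P')$; in~$\bools$ this disjunction matches rules~(CHO1)/(CHO2), and in~$\nnreals$ the sum matches the multiset union of the delay steps supplied by (CHO3)/(CHO4), so that two summands contributing the same rate add rather than collapse. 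When $P$ is a constant~$X$ with $X \dfas P_0$, rule~(CON) gives $\amsetP = v_0$ where $v_0$ is the continuation of~$P_0$, and rules~(CON1)/(CON2) equate the standard steps of~$X$ with those of~$P_0$; the claim then follows from the induction hypothesis for~$P_0$, available since $c(X) > c(P_0)$.

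The parallel case $P = P_1 \prlA P_2$ is the crux. Its treatment hinges on the injectivity of the syntactic operator~$\prlA$ used in the function-level operation of Section~\ref{sec-preliminaries}: every target factors uniquely as $P' = R_1 \prlA R_2$, and all operands vanish on targets not of this shape. For the interleaving rule~(PAR1), using that $\chut_{P_2}(R_2)$ is the semiring unit when $R_2 = P_2$ and~$0$ otherwise, I would compute $(v_1 \prlA \chut_{P_2})(R_1 \prlA R_2) = v_1(R_1) \ast \chut_{P_2}(R_2)$, which equals $v_1(R_1)$ when $R_2 = P_2$ and is~$0$ otherwise, and symmetrically for $\chut_{P_1} \prlA v_2$. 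By the induction hypothesis this reproduces rules~(PAR1a)/(PAR1b) in the boolean case and rules~(PAR1c)/(PAR1d) in the delay case; there the two summands are added in~$\nnreals$, and injectivity of~$\prlA$ ensures that the first summand contributes to a target $R_1 \prlA R_2$ only via $R_2 = P_2$ and the second only via $R_1 = P_1$, so no rate is double counted. For the synchronization rule~(PAR2), relevant to part~(a) only (as $\IML$ has no delay synchronization and $\delta \notin A$ because $A \subseteq \calA$), I would evaluate $(v_1 \prlA v_2)(R_1 \prlA R_2) = v_1(R_1) \ast v_2(R_2)$, where $\ast$ is conjunction in~$\bools$, giving $\TRUE$ exactly when both $P_1 \trans{a} R_1$ and $P_2 \trans{a} R_2$, which is rule~(PAR2). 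Collecting the cases closes the induction. I expect the delicate part to be precisely this parallel bookkeeping: evaluating the function-level $\prlA$ and $\chut$ pointwise and reconciling the pointwise sum in~$\nnreals$ with the multiset of interleaved delay transitions without over- or under-counting.
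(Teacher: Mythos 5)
Your proposal is correct and takes essentially the same route as the paper: the paper also argues by guarded induction on~$P$, uses totality and determinism to read the continuation off the unique applicable rule of Figure~\ref{fig-iml-rules}, evaluates it pointwise via the definitions of $\cho$, $\prlA$ and~$\chut$ on $\fsfn{\prcIML}{\bools}$ and~$\fsfn{\prcIML}{\nnreals}$, and matches the result against the transitions of Figure~\ref{fig-standard-iml}, with the parallel case carrying exactly the factorization-and-no-double-counting bookkeeping you identify. The only differences are organizational: the paper handles parts (a) and (b) as two separate inductions and writes out only the typical cases ($\lambda.P$ and interleaving parallel for (a), $\mu.P$ and parallel for (b)), whereas you merge the two parts and sketch every case.
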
 

\blankline

\noindent
We are now in a position to relate $\FuTS$ bisimulation and standard strong bisimulation for~$\IML$.

\blankline

\begin{theorem}
\label{th-correpondence-iml}
  For any two processes $P_1, P_2 \in \prcIML$ it holds that $P_1 \simiml P_2$ iff $P_1 \eqiml P_2$.
\end{theorem}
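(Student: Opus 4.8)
The plan is to mirror the structure of the proof of Theorem~\ref{th-pepa-strong-equiv-lfts-bisim}: rather than relating the two equivalences directly, I would show that an arbitrary equivalence relation $R \subseteq \prcIML \times \prcIML$ is a strong bisimulation for $\IML$ (Definition~\ref{df-iml-strong-bisimulation}) if and only if it is an $\calSiml$-bisimulation (Definition~\ref{df-ltfs-bisim}). Since both $\eqiml$ and $\simiml$ are defined as relatedness by some such relation, coincidence of the two classes of relations immediately yields $P_1 \simiml P_2 \IFF P_1 \eqiml P_2$.

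The two transition relations of $\calSiml$ are treated separately, each via the corresponding clause of Lemma~\ref{lm-mtrans-trans}. For the action relation, fix $P, Q \in \prcIML$ and $a \in \calA$, and let $\amsetP$ be the unique continuation with $P \mtrans{a}_1 \amsetP$, so that $\theta_1(P)(a) = \amsetP$. By Lemma~\ref{lm-mtrans-trans}(a), $\amsetP(Q') = \TRUE$ exactly when $P \trans{a} Q'$, so the disjunction $\tssum_{Q' \in \Rclass{Q}} \theta_1(P)(a)(Q')$ in $\bools$ equals $\TRUE$ precisely when the set $\ZSET{Q' \in \Rclass{Q}}{P \trans{a} Q'}$ is non-empty, i.e.\ when $\bfT(P,a,\Rclass{Q}) = \TRUE$. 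This gives the identity $\bfT(P,a,\Rclass{Q}) = \tssum_{Q' \in \Rclass{Q}} \theta_1(P)(a)(Q')$ as an equation in $\bools$. For the delay relation, let $\amsetP$ now denote the unique continuation with $P \mtrans{\Edelay}_2 \amsetP$, so $\theta_2(P)(\delta) = \amsetP$. By Lemma~\ref{lm-mtrans-trans}(b) and the definition of $\bfR$ on subsets, $\bfR(P,\Rclass{Q}) = \tssum_{Q' \in \Rclass{Q}} \tssum\MSET{\lambda}{P \Mtrans{\lambda} Q'} = \tssum_{Q' \in \Rclass{Q}} \theta_2(P)(\delta)(Q')$ in $\nnreals$.

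With these two identities, the equivalence of the two notions of bisimulation is a bookkeeping step. The action clause of Definition~\ref{df-iml-strong-bisimulation}, $\bfT(P_1,a,\Rclass{Q}) \iff \bfT(P_2,a,\Rclass{Q})$, is exactly the equality of the booleans $\bfT(P_1,a,\Rclass{Q})$ and $\bfT(P_2,a,\Rclass{Q})$, which by the first identity is precisely clause~(\ref{eq-ltfs-bisim}) for the index $i = 1$. The delay clause $\bfR(P_1,\Rclass{Q}) = \bfR(P_2,\Rclass{Q})$ is, by the second identity, precisely clause~(\ref{eq-ltfs-bisim}) for $i = 2$; here it matters that the second label set $\Delta = \SET{\delta}$ is a singleton, so the single label $\delta$ exhausts the quantification over labels in that clause. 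Hence $R$ satisfies both clauses of Definition~\ref{df-iml-strong-bisimulation} for all related $P_1, P_2$ iff it satisfies~(\ref{eq-ltfs-bisim}) for $i = 1,2$, i.e.\ $R$ is a strong bisimulation iff it is an $\calSiml$-bisimulation, and the theorem follows.

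I expect no serious obstacle: given Lemma~\ref{lm-mtrans-trans}, the argument is essentially routine. The only points requiring care are the passage between the Boolean equality $\bfT(P_1,a,\Rclass{Q}) = \bfT(P_2,a,\Rclass{Q})$ and the biconditional form stated in Definition~\ref{df-iml-strong-bisimulation}, and the observation that totality and determinism of $\calSiml$ (Lemma~\ref{lm-iml-total-det}) guarantee that the unique continuations $\amsetP$ used above indeed exist. The genuine work sits in Lemma~\ref{lm-mtrans-trans} itself, which relates the cumulative $\FuTS$ continuations to the multiset-based standard semantics.
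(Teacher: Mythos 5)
Your proposal is correct and follows essentially the same route as the paper's own proof: both fix an arbitrary equivalence relation $R$, use Lemma~\ref{lm-mtrans-trans}(a) to identify $\bfT(P,a,\Rclass{Q})$ with the Boolean sum $\tssum_{Q' \in \Rclass{Q}} \theta_1(P)(a)(Q')$ and Lemma~\ref{lm-mtrans-trans}(b) to identify $\bfR(P,\Rclass{Q})$ with $\tssum_{Q' \in \Rclass{Q}} \theta_2(P)(\Edelay)(Q')$, and then conclude that $R$ is a strong bisimulation for $\IML$ iff it is an $\calSiml$-bisimulation. Your additional remarks on the singleton label set $\Delta$ and on totality and determinism of $\calSiml$ are points the paper leaves implicit, but they do not change the argument.
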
 
\begin{proof}
  Let~$R$ be an equivalence relation on~$\prcIML$.
  Pick $P \in \prcIML$, $a \in \calA$ and choose any~$Q \in \prcIML$.
  Suppose $P \mtrans{a} \amsetP$. 
  Thus $\theta_1(P)(a) = \amsetP$.
  Then we have
\begin{displaymath} 
\def\arraystretch{1.2}
\begin{array}{rcll}
  \bfT \threetuple {P} a {\Rclass{Q}}  
  & \Leftrightarrow &
  \exists Q' \in \Rclass{Q} \colon P \trans{a} Q'
  & \text{(by definition of $\bfT$)}
  \\ & \Leftrightarrow &
  \exists Q' \in \Rclass{Q} \colon \amsetP(Q') = \TRUE
  & \text{(by Lemma~\ref{lm-mtrans-trans}a)}
  \\ & \Leftrightarrow &
  \tssum_{Q' \in \Rclass{Q}} \; \theta_1(P)(a)(Q) = \TRUE
  & \text{(by definition of $\theta_1$)}
\end{array}
\def\arraystretch{1.0}
\end{displaymath}
  Note, summation in~$\bools$ is disjunction.
  Likewise, on the quantitative side, we have
\begin{displaymath} 
\def\arraystretch{1.2}
\begin{array}{rcll}
  \bfR \twotuple {P}{\Rclass{Q}}  
  & = &
  \tssum_{Q' \in \Rclass{Q}} \; \tssum \MSET{ \lambda }{ P \Mtrans{\lambda} Q' }
  & \text{(by definition of $\bfR$)}
  \\ & = &
  \tssum_{Q' \in \Rclass{Q}} \; \amsetP(Q' \mkern1mu ) 
  & \text{(by Lemma~\ref{lm-mtrans-trans}b)}
  \\ & = &
  \tssum_{Q' \in \Rclass{Q}} \; \theta_2(P)(\Edelay)(Q)
  & \text{(by definition of $\theta_2$)}
\end{array}
\def\arraystretch{1.0}
\end{displaymath}
  Combining the equations, we conclude that a strong bisimulation for~$\IML$ is also a bisimulation for the $\FuTS$~$\calSiml$, and vice versa.
  From this the theorem follows.
\end{proof}

\noindent
Again, as a corollary of the theorem above, we have for~$\IML$ that its notion of strong bisimulation is coalgebraically underpinned, as it coincides, calling to Theorem~\ref{th-correspondence} once more, with behavioral equivalence of the functor~$\calViml$ induced by the $\FuTS$~$\calSiml$.
  As a consequence, the standard, $\FuTS$ and coalgebraic semantics for~$\IML$ are all equal.

%% file conclusions.tex
%!TEX root = accat.tex

\section{Concluding remarks}
\label{sec-conclusions}

Total and deterministic labeled state-to-function transition systems, $\FuTS$, are a convenient instrument to express the operational semantics of both qualitative and quantitative process languages.
  In this paper we have introduced the notion of bisimulation that arises from a $\FuTS$, possibly involving multiple transition relations.
  A correspondence result, Theorem~\ref{th-correspondence}, relates the bisimulation of a $\FuTS$~$\calS$ to behavioral equivalence of the functor $\calVS$ that arises from the $\FuTS$~$\calS$ too.
  For two prototypical stochastic process languages based on $\PEPA$ and on~$\IMC$ we have shown that the notion of stochastic bisimulation associated with these calculi, coincides with the notion of bisimulation of the corresponding~$\FuTS$.\
  Using these $\FuTS$ as a stepping stone, the correspondence result bridges between the concrete notion of bisimulation for $\PEPA$ and~$\IMC$, and the coalgebraic notion of behavioral equivalence.
  Hence, from this perspective, the concrete notions are seen as the natural strong equivalence to consider.

  It is shown in~\cite{BBBRS11}, in the context of weighted automata, that in general the type of functors~$\fsfn{{\cdot}}{\calR}$ may not preserve weak pullbacks and, therefore, the notions of coalgebraic bisimulation and of behavioral equivalence may not coincide.
  %% A counter example is provided, cf.~\cite[Section~2.2]{BBBRS11}.
  Essential for the construction in their setting is the fact that the sum of non-zero weights may add to weight~$0$.
  The same phenomenon prevents a general proof, along the lines of~\cite{DeR99}, for coalgebraic bisimulation and $\FuTS$ bisimulation to coincide.
  In the construction of a mediating morphism, going from $\FuTS$ bisimulation to coalgebraic bisimulation a denominator may be zero, hence a division undefined, in case the sum over an equivalence class cancels out.
  In the concrete case for~\cite{KS08:fossacs}, although no detailed proof is provided there, this will not happen with~$\nnreals$ as underlying semiring.
  We expect that for semirings enjoying the property that for a sum $x = x_1 + \cdots + x_n$ it holds that $x = 0$ iff $x_i = 0$ for all $i = 1 \ldots n$, we will be able to prove that pullbacks are weakly preserved, and hence that coalgebraic bisimulation and behavioral equivalence are the same.
  %% Coincidence of stochastic bisimulation for $\PEPA$ and~$\IML$ and coalgebraic bisimulation of the functors of corresponding $\FuTS$ are then obtained as a direct consequence of our extended correspondence result.
  
  Obviously, Milner-type strong bisimulation~\cite{Mil80:lncs,Par81:gi} and bisimulation for $\FuTS$ over~$\bools$ coincide. 
  Also, strong bisimulation of~\cite{Hil96:phd} involving, apart from the usual transfer conditions, the comparison of state information, viz.\ the apparent rates, can be treated with~$\FuTS$. 
  Again the two notions of equivalence coincide.
  %% For timed and hybrid systems this is less clear, but it may very well be possible to capture these with~$\FuTS$.
  %% Therefore, it will in particular be interesting to compare mixtures of static information and dynamic information in transition systems, of qualitative and/or quantitative nature, and their counterparts formulated with $\FuTS$.
  We expect to be able to deal with discrete time and so-called Markov automata as well.
  For dense time and general measures one may speculate that the use of functions of compact support with respect to a suitable topology may be fruitful.
  Future research needs to reveal under what algebraic conditions of the semirings, or similar structures, or the coalgebraic conditions on the format of the functors involved standard bisimulation, $\FuTS$-bisimulation, coalgebraic bisimulation and behavioral equivalence will amount to similar identifications.
  
\emph{Acknowledgments}
  The authors are grateful to Rocco De~Nicola, Michele Loreti and Jan Rutten for fruitful discussions useful suggestions.
  DL and MM acknowledge support by EU Project n.~257414 Autonomic Service-Components Ensembles (ASCENS) and by CNR/RSTL Project~XXL\@.
  This research has been conducted while EV was spending a sabbatical leave at the CNR/ISTI\@.
  EV gratefully acknowledges the hospitality and support during his stay in Pisa.

\halflineup

\bibliographystyle{eptcs}
\bibliography{accat}

\vfill{}
\pagebreak

%% file appendix.tex
%!TEX root = accat.tex

\appendix

\section{Additional proofs}

\subsection*{Additional proof for Section~\ref{sec-preliminaries}}

\myblock{Lemma~\ref{lm-props-fsum}}
  Let~$X$ be a set, $\calR$ a semiring and $\myop$ an injective binary operation on~$X$.
  For $\varphi, \psi \in \fsfn{X}{\calR}$ it holds that $\fsum (\varphi \cho \psi ) = \fsum \varphi \cho \fsum \psi$ and $\fsum ( \, \varphi \myop \psi \, ) = ( \fsum \varphi ) \ast ( \fsum \psi )$.
\begin{proof}
  We verify $\fsum ( \, \varphi \myop \psi \, ) = ( \fsum \varphi ) \ast ( \fsum \psi )$ for arbitrary $\varphi, \psi \in \fsfn{X}{\calR}$.
\begin{displaymath}
\def\arraystretch{1.2}
\begin{array}{rcll}
  \multicolumn{3}{l}{%
  ( \fsum \varphi ) \ast ( \fsum \psi )
  } %% multicolumn
  \\ & = &
  \bigl( \, \tssum_{x_1 \myin X} \; \varphi(x_1) \, \bigr) 
  \ast
  \bigl( \, \tssum_{x_2 \myin X} \; \psi(x_2) \, \bigr) 
  & \text{(by definition $\fsum$)}
  \\ & = &
  \tssum_{x_1, x_2 \myin X} \; \varphi(x_1) \ast \psi(x_2)
  & \text{(by distributivity of $\ast$ over~$+$)}
  \\ & = &
  \tssum_{x \myin X ,\, x = x_1 \myop x_2} \; \varphi(x_1) \ast \psi(x_2)
  & \text{(by injectivity of $\myop \,$)}
  \\ & = &
  \tssum_{x \myin X ,\, x = x_1 \myop x_2} \; ( \varphi \myop \psi )(x)
  & \text{(by definition of $\varphi \myop \psi$)}
  \\ & = &
  \tssum_{x \myin X } \; ( \varphi \myop \psi )(x)
  & \text{(since $( \varphi \myop \psi )(x) = 0$ if $x \notin \myop^{-1}(X)$)}
\end{array}
\def\arraystretch{1.0}
\end{displaymath}
The fact $\fsum (\varphi \cho \psi ) = \fsum \varphi \cho \fsum \psi$ follows direct from the definitions and commutativity of~$+$.
\end{proof}

\subsection*{Additional proof for Section~\ref{sec-coalgebra}}

\myblock{Lemma~\ref{lm-v-is-bounded}}
  Let~$\calL$ be a set of labels and $\calR$~a semiring.
  Then functor~$\calVLR$ on~$\Set$ is bounded.

\begin{proof}
  Consider the elements $\nu \in \fsfn{\nats}{\calR}^\calL$ as parametrized `valuation' functions, and the elements $\sigma \in X^\nats$ as `selection' functions. 
  The functor $\fsfn{\nats}{\calR}^\calL  \times ( {\cdot} )^\nats : \Set \to \Set$ is the product functor of the lifting $\mathit{id}_{\fsfn{\nats}{\calR}^\calL}$ of the identity functor $\mathit{id}_{\fsfn{\nats}{\calR}}$ to~$\calL$ and of the functor $( {\cdot} )^\nats$.
  Define the mapping  $\eta : \mathit{id}_{\fsfn{\nats}{\calR}^\calL} \times ( {\cdot} )^\nats \to \calVLR$ by putting 
\begin{displaymath}
  \eta_X( \nu , \sigma )(\myell)(x) \, = \, 
  \tssum_{\, n \myin \sigmainv(x)} \; \nu \mkern1mu (\myell)(n)
\end{displaymath}
for $\nu \in \fsfn{\nats}{\calR}^\calL$, $\sigma \in X^\nats$ and $\ell \in \calL$. 
  For $\ell \in \calL$ and $x \in X$, the right-hand sum defining $\eta_X( \nu , \sigma )(\myell)(x)$ exists, since $\nu \mkern1mu (\myell) : \nats \to \calR$ is of finite support.
  Note that $\eta_X( \nu , \sigma )( \myell )$ is of finite support too:
  If $\eta_X(\nu,\sigma)(\myell)(x) \linebreak  \neq 0$, by definition $\sum \ZSET{ \nu \mkern1mu (\myell)(n) }{ n \in \sigmainv (x) } \neq 0$. 
  Then $\nu \mkern1mu (\myell)(n) \neq 0$ for some $n \in \sigmainv(x)$. 
  Thus, $\sigma(n) = x$ for some $n \in \spt(\nu \mkern1mu (\myell) )$.
  So, $\spt( \, \eta_X( \nu , \sigma ) \, ) \subseteq \ZSET{ \sigma(n) }{ n \in \spt( \nu  \mkern1mu (\myell) ) }$ and $\spt( \, \eta_X( \nu , \sigma ) \, )$ is finite.
  
  Next we verify that $\eta : \mathit{id}_{\fsfn{\nats}{\calR}^\calL} \times ( {\cdot} )^\nats \to \calVLR$ is a natural transformation, i.e.\ we check that for $f : X \to Y$ it holds that $\calVLR \compose \eta_X = \eta_Y \compose \la \, \mathit{id}_{\fsfn{\nats}{\calR}^\calL} ,\, f^\nats \ra$.
  
  \smallskip
  \centerline{\includegraphics[scale=0.60]{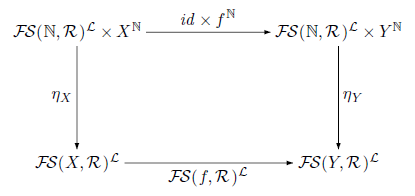}}
  \smallskip

\noindent
  For $\nu \in \fsfn{\nats}{\calR}^\calL$ and $\sigma \in X^\nats$ we have, for $\ell \in \calL$ and $y \in Y$,
\begin{displaymath}
\begin{array}{rclcl}
  \multicolumn{3}{l}{ ( \, \fsfn{f}{\calR}^\calL \compose \eta_X \, )(\nu,\sigma)(\ell)(y) }
  \smallskip \\ & = & 
  \tssum_{x \myin \finv(y)} \; \eta_X( \nu, \sigma )( \ell )( x )
  \displaybreak[2]
  & = & %% \smallskip \\ & = & 
  \tssum_{x \myin \finv(y)} \; \tssum_{n \myin \sigmainv(x)} \; \nu \mkern1mu ( \ell )( n )
  \displaybreak[2]
  \smallskip \\ & = & 
  \tssum_{n \myin ( f \compose \sigma)^{\mkern-1mu -1}} \; \nu \mkern1mu ( \ell )( n )
  %% \displaybreak[2]
  & = & %% \smallskip \\ & = & 
  \eta_Y \, (\nu, f \compose \sigma)( \myell )(y)
  \displaybreak[2]
  \smallskip \\ & = & 
  \eta_Y \, \bigl( \, ( \mathit{id}_{\fsfn{\nats}{\calR}^\calL} \times f^\nats \, ) (\nu, \sigma ) \bigr) (\myell)(y)
  & = & %% \smallskip \\ & = & 
  ( \, \eta_Y \compose \: ( \, \mathit{id}_{\fsfn{\nats}{\calR}^\calL} \times f^\nats \, ))(\nu, \sigma )(\myell)(y)
\end{array}
\end{displaymath}
Thus, $\fsfn{f}{\calR}^\calL \compose \eta_X = \eta_Y \compose \: ( \, \mathit{id}_{\fsfn{\nats}{\calR}^\calL} \times f^\nats \,) $ and $\eta : \mathit{id}_{\fsfn{\nats}{\calR}^\calL} \times ( {\cdot} )^\nats \to \calVLR$ is a natural transformation.

  Finally, we check that $\eta_X : \mathit{id}_{\fsfn{\nats}{\calR}^\calL} \times X^\nats \to \calVLR(X)$ is surjective.
  Choose a set~$X$ and a mapping $\varphi : \calL \to \fsfn{X}{\calR}$. 
  Say, $\spt( \varphi(\myell) ) = \SET{ \, x_{\mkern1mu 1}^{\mkell} , \ldots , x_{n(\myell)}^\mkell }$. 
  Without loss of generality we assume $X \neq \spt( \varphi(\myell) )$ and pick $x^\mkell_0 \in X \backslash \spt( \varphi(\myell) )$.
  Define $\nu \in \fsfn{\nats}{\calR}^\calL$ by $\nu \mkern1mu ( \myell )(n) = \varphi( \myell )(x^\mkell_n)$ for $n = 1 \ldots n(\myell)$ and $\nu( \myell )(n) = 0$ otherwise.
  Define $\sigma : \nats \to X$ by $\sigma(n) = x_n^\mkell$ for $1 \leqslant n \leqslant n(\myell)$ and $\sigma(n) = x_0^\mkell$ otherwise. 
  Then we have 
\begin{displaymath}
\begin{array}{rclclcl}
  \eta_X \la \nu , \sigma \ra ( \myell )( x_i^\mkell )
  & = &
  \tssum_{m \myin \sigmainv(x_i^\mkell)} \; \nu \mkern1mu ( \myell )(m) 
  & = &
  \nu \mkern1mu ( \myell )(i) 
  & = & 
  \varphi \mkern1mu ( \myell )( x^\mkell_i ) 
  \\
  \eta_X \la \nu , \sigma \ra ( \myell )( x )
  & = &
  \tssum_{m \myin \sigmainv(x)} \  \ \nu \mkern1mu ( \myell )(m) 
  & = &
  \tssum_{n \myin \nats \backslash \SET{ \, 1, \ldots, n(\myell) \, }} \; \nu \mkern1mu (\myell)(n)
  & = &
  0
\end{array}
\end{displaymath}
for $i = 1 \ldots n(\myell)$ and $x \notin \spt( \nu ( \ell ) )$.
Thus $\eta_X \la \nu , \sigma \ra ( \myell )( x ) = \varphi \mkern1mu ( \myell )(x)$ for all~$\ell \in \calL$ and~$x \in X$, $\eta_X \la \nu , \sigma \ra  = \varphi$ and $\eta_X : \mathit{id}_{\fsfn{\nats}{\calR}^\calL} \times X^\nats \to \calVLR(X)$ is surjective.
\end{proof}

\subsection*{Additional proofs for Section~\ref{sec-pepa}}
  
\myblock{Definition}\cite[Definition 3.3.1]{Hil96:phd}
We put
\begin{displaymath}
\begin{array}{lcllclcll}
  r_a(\nil) & = & 0 &
  &&
  r_a( P \cho Q ) & = & r_a(P) + r_a(Q) &
  \\
  r_a( \alambda.P ) & = & \lambda &
  &&
  r_a( P \prlA Q ) & = & r_a(P) + r_a(Q)
  & \text{if $a \notin A$} \\ 
  r_a( \blambda.P ) & = & 0 & \text{for $b \neq a$}
  &&
  r_a( P \prlA Q ) & = & \min \SET{ \, r_a(P) ,\, r_a(Q) \, } 
  & \text{if $a \in A$}
  \\
  r_a(X) & = & r_a(P) & \text{if $X \dfas P$}
\end{array}
\end{displaymath}
%and define, for $\PEPA$-processes $P$ and~$Q$,
%\begin{math}
%  \arf( \mkern1mu P, \, Q \mkern1mu ) = 
%  \begin{array}{c}
%    \min \mkern2mu \SET{ \, r_a( P ), \, r_a( Q ) \, }
%    	\\ \hline
%	r_a( P ) \cdot r_a( Q )
%  \end{array}
%  ,\,
%\end{math}
%provided $r_a(P), r_a(Q) > 0$, and $\arf( \mkern1mu P, \, Q \mkern1mu ) = 0$ otherwise.

\blankline

\myblock{Lemma~\ref{lm-sem-syn-arf}}
  Let $P \in \prcPEPA$ and $a \in \calA$. 
  Suppose $P \mtrans{\Edelaya}_p \amsetP$.
  Then $\fsum \amsetP = r_a(P)$.
\begin{proof}
  Guarded recursion.
  We treat the two cases for the parallel construct.
  
  Case $P = P_1 \prlA P_2$, $a \notin A$.
  Suppose $P_1 \mtrans{\Edelaya}_p \amsetP_1$, $P_2 \mtrans{\Edelaya}_p \amsetP_2$.
  Then $\amsetP = ( \, \amsetP_1 \prlA \chut_{P_2} \, ) \cho ( \, \chut_{P_1} \prlA \amsetP_2 \, )$.
  Therefore we have 
\begin{displaymath}
\def\arraystretch{1.2}
\begin{array}{rcll}
  \fsum \amsetP
  & = &
  \fsum ( \, \amsetP_1 \prlA \chut_{P_2} \, ) \cho ( \, \chut_{P_1} \prlA \amsetP_2 \, )
  \\ & = & 
  \fsum ( \, \amsetP_1 \prlA \chut_{P_2} \, ) \cho \fsum ( \, \chut_{P_1} \prlA \amsetP_2 \, )
  & \text{(by Lemma~\ref{lm-props-fsum})}
  \\ & = & 
  ( \, \fsum \amsetP_1 \cdot \fsum \chut_{P_2} \, ) \cho ( \, \fsum \chut_{P_1} \cdot \fsum \amsetP_2 \, )
  & \text{(by Lemma~\ref{lm-props-fsum})}
  \\ & = & 
  \fsum \amsetP_1 \cho \fsum \amsetP_2
  & \text{(since $\fsum \chut_{P_1}, \fsum \chut_{P_2} = 1$)}
  \\ & = & 
  r_a(P_1) \cho r_a(P_2)
  & \text{(by the induction hypothesis)}
  \\ & = & 
  r_a(P_1 \prlA P_2)
  & \text{(by definition~$r_a$)}
\end{array}
\def\arraystretch{1.0}
\end{displaymath}

  Case $P = P_1 \prlA P_2$, $a \in A$.
  Suppose $P_1 \mtrans{\Edelaya}_p \amsetP_1$, $P_2 \mtrans{\Edelaya}_p \amsetP_2$.
  Then $\amsetP = \arf ( \mkern1mu \amsetP_1, \amsetP_2 \mkern1mu ) \cdot ( \, \amsetP_1 \prlA  \amsetP_2 \, )$.
  If $\fsum \amsetP_1 , \fsum \amsetP_2 > 0$ we have
\begin{displaymath}
\def\arraystretch{1.2}
\begin{array}{rcll}
  \fsum \amsetP
  & = &
  \fsum ( \arf( \mkern1mu \amsetP_1, \amsetP_2 \mkern1mu ) \cdot ( \, \amsetP_1 \prlA \amsetP_2 \, )
  \\ & = & 
%  \arf( \mkern1mu \amsetP_1, \amsetP_2 \mkern1mu ) \cdot \fsum ( \, \amsetP_1 \prlA \amsetP_2 \, )
%  & \text{(by Lemma~\ref{lm-props-fsum})}
%  \\ & = & 
  \arf( \mkern1mu \amsetP_1, \amsetP_2 \mkern1mu ) \cdot \fsum \amsetP_1 \cdot  \fsum \amsetP_2 
  & \text{(by Lemma~\ref{lm-props-fsum})}
  \\ & = & 
  \begin{array}{c}
    \min \mkern2mu \SET{ \, \fsum \amsetP_1 ,\, \fsum \amsetP_2 \, }
    	\\ \hline
	\fsum \amsetP_1 \cdot \fsum \amsetP_2 )
  \end{array}
  \cdot \fsum \amsetP_1 \cdot \fsum \amsetP_2
  & \text{(by definition of $\arf$)}
  \\ & = & 
  \min \mkern2mu \SET{ \, \fsum \amsetP_1 , \, \fsum \amsetP_2 \, }
  & 
  \\ & = & 
  \min \mkern2mu \SET{ \, r_a( P_1 ), \, r_a( P_2 ) \, }
  & \text{(by the induction hypothesis)}
  \\ & = & 
  r_a(P_1 \prlA P_2)
  & \text{(by definition~$r_a$)}
\end{array}
\def\arraystretch{1.0}
\end{displaymath}
  If $\fsum \amsetP_1 , \fsum \amsetP_2 = 0$, then $\arf( \amsetP_1, \amsetP_2) = 0$, by definition, and $r_a(P_1), r_a(P_2) = 0$, by induction hypothesis.
  Therefore we have $\fsum \amsetP = \arf( \amsetP_1, \amsetP_2) \cdot \fsum \amsetP_1 \cdot \fsum \amsetP_2 = 0$ as well as $r_a(P_1 \prlA P_2) = \min \mkern2mu \SET{ \, r_a( P_1 ), \, r_a( P_2 ) \, } = 0$.
  So, also now, $\fsum \amsetP = r_a(P_1 \prlA P_2)$.
  The other cases are straightforward, in the case of $P_1 \cho P_2$  also relying on Lemma~\ref{lm-props-fsum}.
\end{proof}

\blankline

\myblock{Corollary}
  If $P \mtrans{\Edelaya} \amsetP$ and $Q \mtrans{\Edelaya} \amsetQ$, then $\arf( \amsetP , \amsetQ ) = \arf( P , Q )$.
\begin{proof}
  Direct from the definitions.
\end{proof}

\blankline

\myblock{Lemma~\ref{lm-cnt-ltfs-match}}
  Let $P \in \prcPEPA$ and $a \in \calA$. 
  Suppose $P \mtrans{\Edelaya} \amsetP$.
  Then it holds that $\amsetP (P') = \tssum \; \MSET{ \lambda }{ P \trans{a,\lambda} P' }$ for all~$P' \in \prcPEPA$.
\begin{proof}
  Guarded induction on~$P$.
  We only treat the cases for the parallel composition.
  Note, the operation $ {\prlA} : \prcPEPA \times \prcPEPA \to \prcPEPA$ with ${\prlA}\twotuple{P_1}{P_2} = P_1 \prlA P_2$ is injective.
  Recall, for $\amsetP_1, \amsetP_2 \in \fsfn{\prcPEPA}{\nnreals}$, we have $(\amsetP_1 \prlA \amsetP_2)(P_1 \prlA P_2) = \amsetP_1(P_1) \cdot \amsetP_2(P_2)$.
  
  Suppose $a \notin \calA$.
  Assume $P_1 \mtrans{a} \amsetP_1$, $P_2 \mtrans{a} \amsetP_2$, $P _1 \prlA P_2 \mtrans{a} \amsetP$.
  We distinguish three cases.
  Case~(I), $P' = P'_1 \prlA P_2$, $P'_1 \neq P_1$.
  Then we have
\begin{displaymath}
\def\arraystretch{1.2}
\begin{array}{rcll}
  \multicolumn{4}{l}{\tssum \MSET{ \lambda }{ P_1 \prlA P_2 \, \transalambda \, P' }}
  \\ & = &
  \tssum \MSET{ \lambda }{ P_1 \, \transalambda \, P'_1 }
  & \text{(by rule (PAR1a))}
  \\ & = &
  \amsetP_1( P'_1 )
  & \text{(by the induction hypothesis)}
  \\ & = &
  \amsetP_1( P'_1 ) \cdot \chut_{P_2}( P_2 )
  & \text{(as $\chut_{P_2}( P_2 ) = 1$)}
  \\ & = & \multicolumn{2}{l}{%
  ( \amsetP_1 \prlA \chut_{P_2} )( P'_1 \prlA P_2 )
  +
  ( \chut_{P_1} \prlA \amsetP_2 )( P'_1 \prlA P_2 )
  } %% end multicolumn
  \\ & & &
  \text{(definition $\prlA$ on~$\fsfn{\prcPEPA}{\nnreals}$, $\chut_{P_1}(P'_1) = 0$)}
  \\ & = & 
  \amsetP( P' )
  & \text{(by rule (PAR1)}
\end{array}
\def\arraystretch{1.0}
\end{displaymath}
Case~(II), $P' = P_1 \prlA P'_2$, $P'_2 \neq P_2$: similar.
  Case~(III), $P' = P_1 \prlA P_2$.
  Then we have
\begin{displaymath}
\def\arraystretch{1.2}
\begin{array}{rcll}
  \multicolumn{4}{l}{\tssum \MSET{ \lambda }{ P_1 \prlA P_2 \, \transalambda \, P' }}
  \\ & = &
  \bigl( \, \tssum \MSET{ \lambda }{ P_1 \, \transalambda \, P_1 } \, \bigr)
  \  + \ 
  \bigl( \, \tssum \MSET{ \lambda }{ P_2 \, \transalambda \, P_2 } \, \bigr)
  & \text{(by rules (PAR1a) and (PAR1b))}
  \\ & = &
  \amsetP_1( P_1 )
  +
  \amsetP_2( P_2 )
  & \text{(by the induction hypothesis)}
  \\ & = & \multicolumn{2}{l}{%
  ( \amsetP_1 \prlA \chut_{P_2} )( P_1 \prlA P_2 )
  +
  ( \chut_{P_1} \prlA \amsetP_2 )( P_1 \prlA P_2 )
  } %% end multicolumn
  \\ & & \multicolumn{2}{r}{%
  %% \qquad \qquad
  \text{(definition $\prlA$ on~$\fsfn{\prcPEPA}{\nnreals}$, $\chut_{P_1}(P_1),\ \chut_{P_2}( P_2 ) = 1$)}
  } %% end multicolumn
  \\ & = & 
  \amsetP( P' )
  & \text{(again by rule (PAR1)}
\end{array}
\def\arraystretch{1.0}
\end{displaymath}

  Suppose $a \in A$.
  Assume $P_1 \mtrans{a} \amsetP_1$, $P_2 \mtrans{a} \amsetP_2$, $P _1 \prlA P_2 \mtrans{a} \amsetP$.  
  Without loss of generality, $P' = P'_1 \prlA P'_2$ for suitable $P'_1, P'_2 \in \prcPEPA$.
  
\begin{displaymath}
\def\arraystretch{1.2}
\begin{array}{rcll}
  \multicolumn{4}{l}{\tssum \MSET{ \lambda }{ P_1 \prlA P_2 \, \transalambda \, P' }}
  \displaybreak[4] \\ & = &
  \tssum \MSET{ \arf(P_1,P_2) \cdot \lambda_1 \cdot \lambda_2 }{ P_1 \, \trans{a,\lambda_1} \, P'_1 ,\, P_2 \, \trans{a,\lambda_2} \, P'_2 }
  & \text{(by rule (PAR2))}
  \displaybreak[4] \\ & = &
  \arf(P_1,P_2) \cdot 
  \Bigl( \, 
  \tssum \MSET{ \lambda_1 }{ P_1 \, \trans{a,\lambda_1} \, P'_1 } 
  \, \Bigr) \cdot \Bigl( \, 
  \tssum \MSET{ \lambda_2 }{ P_2 \, \trans{a,\lambda_2} \, P'_2 }
  \, \Bigr)
  & \text{(by distributivity)}
  \displaybreak[4] \\ & = &
  \arf(P_1,P_2) \cdot 
  \amsetP_1( P'_1 )
  \cdot
  \amsetP_2( P'_2 )
  & \text{(by the induction hypothesis)}
  \displaybreak[3] \\ & = &
  \arf( \amsetP_1, \amsetP_2) \cdot 
  \amsetP_1( P'_1 )
  \cdot
  \amsetP_2( P'_2 )
  & \text{(by the corollary above)}
  \displaybreak[2] \\ & = & 
  \arf(\amsetP_1,\amsetP_2) \cdot 
  ( \amsetP_1 \prlA \amsetP_2 )( P'_1 \prlA P'_2 )
  & \text{(definition $\prlA$ on~$\fsfn{\prcPEPA}{\nnreals}$)}
  \\ & = & 
  \amsetP( P' )
  & \text{(by rule (PAR2)}
\end{array}
\def\arraystretch{1.0}
\end{displaymath}
The other cases are simpler and omitted here.
\end{proof}

\subsection*{Additional proofs for Section~\ref{sec-iml}}

\myblock{Lemma~\ref{lm-mtrans-trans}}
\begin{itemize}
%% \item [] \mbox{}
\item [(a)]
  Let $P \in \prcIML$ and $a \in \calA$. 
  If $P \mtrans{a}_1 \amsetP$ then $P \trans{a} P' \iff \amsetP( P' ) = \TRUE$.
\item [(b)]
  Let $P \in \prcIML$ . 
  If $P \mtrans{\Edelay}_2 \amsetP$ then $\tssum \MSET{ \lambda }{ P \Mtrans{\lambda} P' } = \amsetP( P' )$.
  \qed
\end{itemize}
\begin{proof} (a)
  Guarded induction. 
  Let $a \in \calA$.
  We treat the typical cases $\lambda.P$ and $P_1 \prlA P_2$ for $a \notin A$.
  
  Case~$\lambda.P$.
  Suppose $\lambda.P \mtrans{a} \amsetP$.
  Then we have $\amsetP = \zerof_\bools$.
  Thus, both $\lambda.P \trans{a} P'$ for no~$P' \in \prcIML$, as no transition is provided in~$\trans{}$, and $\amsetP( P' ) = \FALSE$ by definition of~$\zerof_\bools$, for all~$P' \in \prcIML$.
  
  Case~$P_1 \prlA P_2$, $a \notin A$.
  Suppose $P_1 \mtrans{a} \amsetP_1$, $P_2 \mtrans{a} \amsetP_2$ and $P_1 \prlA P_2 \mtrans{a} \amsetP$.
  Then it holds that $\amsetP = ( \amsetP_1 \prlA \chut_{P_2} ) \cho ( \chut_{P_1} \prlA \amsetP_2 )$.
  Recall, for~$Q \in \prcIML$ and $\chut_Q \in \fsfn{\prcIML}{\bools}$, $\chut_Q(Q') = \TRUE$ iff $Q' = Q$, for~$Q' \in \prcIML$.
  We have
\begin{displaymath}
\def\arraystretch{1.2}
\begin{array}{rcl}
  \multicolumn{3}{l}{%
  P_1 \prlA P_2 \trans{a} P'
  } %% multicolumn
  \\ & \Leftrightarrow &
  ( \, P_1 \trans{a} P'_1 \land P' = P'_1 \prlA P_2 \, ) 
  \; \lor \;
  ( \, P_2 \trans{a} P'_2 \land P' = P_1 \prlA P'_2 \, ) 
  \\ &&
  \text{(by analysis of $\trans{}$)}
  \\ & \Leftrightarrow &
  ( \, \amsetP_1( P'_1 ) = \TRUE \land P' = P'_1 \prlA P_2 \, )
  \; \lor \;
  ( \, \amsetP_2( P'_2 ) = \TRUE \land P' = P_1 \prlA P'_2 \, )
  \\ && \text{(by the induction hypothesis)}
  \displaybreak[2] \\ & \Leftrightarrow &
  ( \, \amsetP_1( P'_1 ) \cdot \chut_{P_2}(P_2) = \TRUE \land P' = P'_1 \prlA P_2 \, )
  \; \lor \;
  ( \, \chut_{P_1}(P_1) \cdot \amsetP_2( P'_2 ) = \TRUE \land P' = P_1 \prlA P'_2 \, )
  \\ && \text{(by definition of $\chut_{P_1}$ and~$\chut_{P_2}$)}
  \displaybreak[2] \\ & \Leftrightarrow &
  ( \, ( \amsetP_1 \prlA \chut_{P_2} )(P'_1 \prlA P_2) = \TRUE \land P' = P'_1 \prlA P_2 \, )
  \; \\ && \ \ \ \ \ \ \ \ \ \ \lor \;
  ( \, ( \chut_{P_1} \prlA \amsetP_2 )( P_1 \prlA P'_2 ) = \TRUE \land P' = P_1 \prlA P'_2 \, )
  \\ && \text{(by definition of~$\prlA$)}
  \displaybreak[2] \\ & \Leftrightarrow &
  ( \amsetP_1 \prlA \chut_{P_2} )(P') = \TRUE 
  \; \lor \;
  ( \chut_{P_1} \prlA \amsetP_2 )(P') = \TRUE 
  \\ && \text{(by definition of~$\prlA$, $\chut_{P_1}$ and~$\chut_{P_2}$)}
  \displaybreak[2] \\ & \Leftrightarrow &
  ( \, ( \amsetP_1 \prlA \chut_{P_2} ) \cho ( \chut_{P_1} \prlA \amsetP_2 ) \, )( P' ) = \TRUE
  \\ && \text{(by definition of~$\cho$ on~$\fsfn{\prcIML}{\bools}$)}
  \\ & \Leftrightarrow &
  \amsetP( P' ) = \TRUE
\end{array}
\def\arraystretch{1.0}
\end{displaymath}
  The other cases are standard or similar and easier.
%\end{proof}
%
%\blankline
%
%\begin{lemma}
%\label{app-lm-mtrans-trans-rates}
%  Let $P \in \prcIML$ . 
%  If $P \mtrans{\Edelay}_2 \amsetP$ then $\tssum \MSET{ \lambda }{ P \Mtrans{\lambda} P' } = \amsetP( P' )$.
%\end{lemma} 
%\begin{proof}

  (b)
  Guarded induction.
  We treat the cases for~$\mu.P$ and $P_1 \prlA P_2$.  
  Case~$\mu.P$.
  Assume $P \mtrans{\Edelay}_2 \amsetP$.
  Suppose $P = \mu.P'$.
  Then it holds that $P$ admits a single $\Mtrans{} \mkern3mu$-transition, viz.\ $P \Mtrans{\mu} P'$.
  Thus we have $\tssum \MSET{ \lambda }{ P \Mtrans{\lambda} P' } = \mu = [ \, P' \mapsto \mu \, ](P') = \amsetP( P' )$.
  Suppose $P = \mu.P''$ for some $P'' \neq P$.
  Then we have $\tssum \MSET{ \lambda }{ P \Mtrans{\lambda} P' } = 0 = [ \, P'' \mapsto \mu \, ](P') = \amsetP( P' )$.
  
  Case~$P_1 \prlA P_2$.
  Assume $P_1 \mtrans{\Edelay}_2 \amsetP_1$, $P_2 \mtrans{\Edelay}_2 \amsetP_2$ and $P_1 \prlA P_2 \mtrans{\Edelay} \amsetP$. 
  It holds that $\amsetP = ( \amsetP_1 \prlA \chut_{P_2} ) \cho ( \chut_{P_1} \prlA \amsetP_2 )$.
  We calculate 
\begin{displaymath}
\def\arraystretch{1.2}
\begin{array}{rcl}
  \multicolumn{3}{l}{%
  \tssum \MSET{ \lambda }{ P_1 \prlA P_2 \Mtrans{\lambda} P' }
  } %% multicolumn
  \\ & = &
   \tssum \MSET{ \lambda }{ P_1 \Mtrans{\lambda} P'_1 ,\ P' = P'_1 \prlA P_2 } \cho \tssum \MSET{ \lambda }{ P_2 \Mtrans{\lambda} P'_2 ,\ P' = P_1 \prlA P'_2 }
  \\ && \text{(by analysis of $\Mtrans{}$)}
  \\ & = &
  ( \, \texttt{if} \; P' = P'_1 \prlA P_2 \; \texttt{then} \  \tssum \MSET{ \lambda }{ P_1 \Mtrans{\lambda} P'_1 } \ \texttt{else} \ 0 \ \texttt{end} \, ) 
  \cho 
  {} \\ & & \qquad \qquad
  ( \, \texttt{if} \; P' = P_1 \prlA P'_2 \; \texttt{then} \  \tssum \MSET{ \lambda }{ P_2 \Mtrans{\lambda} P'_2 } \  \texttt{else} \ 0 \ \texttt{end} \, )
  \displaybreak[2] \\ & = &
  ( \, \texttt{if} \; P' = P'_1 \prlA P_2 \; \texttt{then} \  \amsetP_1(P'_1) \ \texttt{else} \ 0 \ \texttt{end} \, ) 
  \cho 
  {} \\ & & \qquad \qquad
  ( \, \texttt{if} \; P' = P_1 \prlA P'_2 \; \texttt{then} \  \amsetP_2(P'_2) \  \texttt{else} \ 0 \ \texttt{end} \, )
  \\ && \text{(by induction hypothesis for $P_1$ and~$P_2$)}
  \displaybreak[2] \\ & = &
  ( \, \amsetP_1 \prlA \chut_{P_2} \, )(P') \cho ( \, \chut_{P_1} \prlA \amsetP_2 \, )(P')
  \\ && \text{(by definition of~$\prlA$, $\chut_{P_1}$,$\chut_{P_2}$ and~$\cho$ on~$\fsfn{\prcIML}{\nnreals}$)}
  \\ & = &
  \amsetP(P')
\end{array}
\def\arraystretch{1.0}
\end{displaymath}
  The remaining cases are left to the reader.
\end{proof}

\end{document}